\def \Z {{\mathbb Z}}
\def \x {{\bf x}}
\def \y {{\bf y}}
\newtheorem{theorem}{Theorem}
\newtheorem{lemma}{Lemma}
\newtheorem{corollary}[theorem]{Corollary}
\newtheorem{remark}{Remark}
\newtheorem{note}[theorem]{Note}
\newtheorem{definition}{Definition}
\def\BibTeX{{\rm B\kern-.05em{\sc i\kern-.025em b}\kern-.08em
    T\kern-.1667em\lower.7ex\hbox{E}\kern-.125emX}}
\begin{document}

\title{DNA Codes over the Ring $\mathbb{Z}_4 + w\mathbb{Z}_4$}

\author{\IEEEauthorblockN{
Adel Alahmadi$^1$\footnote{$^1$ Department of Mathematics, King Abdulaziz University, Jeddah, Saudi Arabia}, 
Krishna Gopal Benerjee$^2$\footnote{$^2$ Department of Electrical Engineering, Indian Institute of Technology Kanpur, Uttar Pradesh - 208016, India}, 
Sourav Deb$^3$\footnote{$^3$ Dhirubhai Ambani Institute of Information and Communication Technology Gandhinagar, Gujarat - 382007, India}, and \\ 
Manish K Gupta$^3$,~\IEEEmembership{Senior Member,~IEEE}
\thanks{email: analahmadi@kau.edu.sa, kgopal@iitk.ac.in, sourav\_deb@daiict.ac.in, and mankg@computer.org }}
}

\maketitle

\begin{abstract}
In this present work, we generalize the study of construction of DNA codes over the rings $\mathcal{R}_\theta=\mathbb{Z}_4+w\mathbb{Z}_4$, $w^2 = \theta $ for $\theta \in \mathbb{Z}_4+w\mathbb{Z}_4$. 
Rigorous study along with characterization of the ring structures is presented. 
We extend the Gau map and Gau distance, defined in \cite{DKBG}, over all the $16$ rings $\mathcal{R}_\theta$. 
Furthermore, an isometry between the codes over the rings $\mathcal{R}_\theta$ and the analogous DNA codes is established in general.
Brief study of dual and self dual codes over the rings is given including the construction of special class of self dual codes that satisfy reverse and reverse-complement constraints. 
The technical contributions of this paper are twofold.
Considering the Generalized Gau distance, Sphere Packing-like bound, GV-like bound, Singleton like bound and Plotkin-like bound are established over the rings $\mathcal{R}_\theta$. 
In addition to this, optimal class of codes are provided with respect to Singleton-like bound and Plotkin-like bound. 
Moreover, the construction of family of DNA codes is proposed that satisfies reverse and reverse-complement constraints using the Reed-Muller type codes over the rings $\mathcal{R}_\theta$. 
\end{abstract}

\begin{IEEEkeywords}
DNA Codes, Ring $\mathbb{Z}_4+w\mathbb{Z}_4$, DNA Data Storage
\end{IEEEkeywords}

\section{Introduction}
The topic of DNA codes has fascinated researchers for the last two decades. 
The intrinsic idea of computation using DNA has been posed in a seminal paper by Adleman in 1994 \cite{adleman1994molecular}. 
Driven by the ground-breaking idea, the branch of mathematical construction for artificial DNA codes and many more interdisciplinary topics, such as DNA-based data storage systems, computation using DNA tiles, etc. have emerged.
DNA (\textit{DeoxyriboNucleic Acid}) stores of all genetic as well as biological instructions of life. 
DNA strands can be recognised by four \textit{nucleotides (nt)} or bases, Adenine ($A$), Guanine ($G$), Cytosine ($C$) and Thymine ($T$), where the Watson-Crick complements for $A$ and $G$ are $T$ and $C$ respectively, $i.e.$, $A^c=T$ and $G^c=C$.
A DNA strand is a string of certain length containing the symbols $A$, $G$, $C$ and $T$, $e.g.$, $ACTTAGA$ is a DNA string of length $7$, as the string contains $7$ symbols, not necessarily distinct.
For a DNA string $\x = (x_1 \ x_2 \ \ldots \ x_n)$ of length $n$, the reverse and complement DNA strings are considered as $\x^r = (x_n \ x_{n-1} \ \ldots \ x_1)$ and $\x^c = (x^c_1 \ x^c_2 \ \ldots \ x^c_n)$, respectively.
Following these definitions, the reverse-complement DNA string of the DNA string $\x = (x_1 \ x_2 \ \ldots \ x_n)$ of length $n$ will be $\x^{rc} = (x^c_n \ x^c_{n-1} \ \ldots \ x^c_1)$.
Considering the previously stated DNA string $\x=ACTTAGA$, the corresponding reverse, complement and reverse-complement DNA strings will be $\x^r=AGATTCA$, $\x^c=TGAATCT$ and $\x^{rc}=TCTAAGT$, respectively.  
The key point of using DNA for computing is the DNA hybridization in between the string(s), which however is also the source of errors.
The purpose of constructing DNA codes using different approaches is to generate sufficiently dissimilar DNA strands to avoid error(s) in any aspect \cite{gaborit2005linear,king2003bounds}. 
Meanwhile, various celebrated constructions have been proposed \cite{limbachiya2016art}, which produce optimal DNA codes satisfying different constraints, where DNA code construction using rings and fields is treated as a prominent one.
Recently, DNA codes satisfying multiple constraints are constructed in \cite{Benerjee2020,9214530}.

We denote the DNA symbol set $\Sigma_{DNA}= \{A,G,C,T\}$. 
Also Hamming distance $d_H$ between two DNA strings $\x=(x_1 \ x_2 \ \ldots \ x_n)$ and $\y=(y_1 \ y_2 \ \ldots \ y_n)$ of length $n$ is denoted as $d_H(\x,\y)$ to be the number of positions where the strings $\x$ and $\y$ differ, $i.e.$, the number of elements in the set $\{i:x_i \neq y_i\}$, where $i \in \{1,2, \ldots n\}$. 
A DNA code over the symbol set $\Sigma_{DNA}$ is defined by $\mathscr{C}_{DNA} (n,M,d_H) \subseteq \Sigma^n_{DNA}$, containing $M$ distinct DNA strings or codewords, each of length $n$ and having minimum Hamming distance $d_H$ between any two codewords.     
According to the definition $d_H=\min\{d_H (\x,\y):\mbox{for all } \x,\y \in \mathscr{C}_{DNA}, s.t., \x \neq \y \}$.
In this paper, the DNA code $\mathscr{C}_{DNA}$ is considered as reversible DNA code or closed under reverse constraint, if for each $\x\in\mathscr{C}_{DNA}$, $\x^r\in\mathscr{C}_{DNA}$.
Similarly, a DNA code $\mathscr{C}_{DNA}$ is a complement DNA code if, for each $\x\in\mathscr{C}_{DNA}$, $\x^c\in\mathscr{C}_{DNA}$.
We also consider that if, for each $\x\in\mathscr{C}_{DNA}$, $\x^{rc}\in\mathscr{C}_{DNA}$ then the DNA code $\mathscr{C}_{DNA}$ satisfies the reverse-complement constraint, and the code is called reversible-complement DNA code.
To maximize the output reliability of DNA codes, a set of combinatorial constraints, $i.e.$, Hamming constraint, reverse constraint, reverse-complement constraint and $GC$-content constraint, are taken into consideration in efficient constructions \cite{marathe2001combinatorial}.
Thus motivated, researchers are attracted towards the algebraic construction of DNA codes where each codeword satisfies the given constraints to ensure the appropriate match during the process of DNA hybridization and avoid error(s). 

Rykov \textit{et. al} \cite{rykov2001dna} pioneered the construction of DNA codes using sets of reverse-complement cyclic codes over quaternary alphabets designating $0,1,2$ and $3$ for $A,C,G$ and $T$, respectively.
Theoretical bounds are also established for parameters of the constructed codes in this paper. 
From the theoretical point of view, it is quite evident to construct DNA codes over the ring $\mathbb{Z}_4$ and the field $GF(4)$ or its variants.
The construction of linear DNA codes using the ring $\mathbb{Z}_4$ was proposed by King \textit{et. al} in \cite{gaborit2005linear}, where the constructed codes follow the reverse and reverse-complement constraints.
A parallel approach can be found in \cite{abualrub2006construction} to construct linear and additive cyclic codes of odd length that satisfy reverse-complement constraints over the Galois filed $GF(4)$ of the purpose of DNA computing. 
In \cite{faria2010dna,rocha2010dna}, two different approaches can be observed to construct DNA sequences using BCH codes over $GF(4)$ and $\mathbb{Z}_4$-linear codes, respectively. 

In this paper, we restrict our focuses to the ring $\mathbb{Z}_4+w\mathbb{Z}_4$ and its variants, containing $16$ elements.
The class of finite rings $S_{2m}=\mathbb{Z}_{2m}+i\mathbb{Z}_{2m}$, $i^2=-1$ was introduced in 2005 by Choie and Dougherty in \cite{choie2005codes} and self dual codes are studies.
Motivated by the approach, in 2014 linear codes as well as self dual codes which are analogous allowing some projections from the ring $\mathbb{Z}_4+u\mathbb{Z}_4$ to $\mathbb{Z}_4$ and $\mathbb{F}_2+u\mathbb{F}_2$,  have been constructed by Yildiz \textit{et. al} \cite{yildiz2014linear}. 
Furthermore, MacWilliams identities were also established in this paper. 
Cyclic DNA codes of odd length over the rings $\mathbb{Z}_4+w\mathbb{Z}_4$, $w^2=2$ and $\mathbb{Z}_4+u\mathbb{Z}_4$, $u^2=0$ have been constructed in \cite{dertli2016cyclic}, \cite{pattanayak2015cyclic}. 
Usage of the ring $\mathbb{F}_4+u\mathbb{F}_4$, $u^2=0$ in constructing reversible cyclic DNA codes can be found in \cite{srinivasulu2015reversible}. 
In \cite{bayram2015codes}, Siap \textit{et. al} presented an extensive study on characterizing the structure of linear, constacyclic and cyclic codes over the ring $\mathbb{F}_4[v]/(v^2-v)$ and construction of optimal DNA codes satisfying reverse-complement constraint via reversible codes over the considered ring was proposed.
Ding \textit{et. al} in \cite{dinh2019construction}, have constructed cyclic DNA codes over the ring $\mathbb{Z}_4[u]/(u^2-1)$ of odd length based on deletion distance.
In \cite{selfdualbandi2015}, self dual codes are constructed over the rings $\mathbb{Z}_4+w\mathbb{Z}_4$, $w^2=i$ where $i \in \{1, 2w\}$. 
Circulant self-dual codes and Type II codes over these rings were also studied in this paper. 
Recently in 2018, Limbachiya $et.\ al$ \cite{DKBG} defined the Gau distance on the elements of the ring $\mathbb{Z}_4+w\mathbb{Z}_4$, $w^2=2+2w$ and constructed DNA codes of even length, which satisfy the reverse and reverse-complement constraints. 
The authors also proposed the Gau map, which is an isometry between the ring elements and their corresponding DNA string of length two. 
That is, the map preserves the distance from the codes over the ring based on Gau distance and the corresponding DNA codes over $\Sigma^{2n}_{DNA}$ based on Hamming distance.

\subsection{Contributions:}
Driven by the construction  proposed in \cite{DKBG}, we extend the study of codes over the ring $\mathcal{R}_{\theta}=\mathbb{Z}_4+w\mathbb{Z}_4$, $w^2=\theta$ such that $\theta \in \mathbb{Z}_4+w\mathbb{Z}_4$ with respect to the Generalized Gau distance and establish a complete characterization of DNA codes using Generalized Gau map, where the Generalized Gau distance and Generalized Gau map are Gau distance and Gau map defined over the ring $\mathbb{Z}_4+w\mathbb{Z}_4$ for any $\theta\in\mathbb{Z}_4+w\mathbb{Z}_4$.
The ring structure classification provides a symmetric distribution of $8$ chain and $8$ non-chain rings.
The generalized construction proposes the necessary and sufficient condition so that the DNA codes satisfy the reverse and reverse-complement constraints.
Using the construction, sufficient conditions on different categories of self dual codes that satisfy the reverse and reverse-complement constraints are proposed. 
The present work proposes theoretical bounds on parameters for the construction.
Furthermore, we obtain the parameters of different classes of optimal codes that satisfy Singleton-like bound and Plotkin-like bound.
A family of DNA codes satisfying reverse and reverse-complement constraints has been proposed using Reed-Muller type of codes over the ring.
  
\subsection{Organization:}
This paper is organized in the following manner Section \ref{sec:Preliminaries} provides a complete overview of the ring structures $\mathcal{R}_\theta$, $\theta \in \mathbb{Z}_4+w\mathbb{Z}_4$. 
The generalization of Gau map and Gau distance over the ring $\mathcal{R}_\theta$ is proposed in Section \ref{sec:Gau}.
The necessary and sufficient conditions for reverse and complement codes are derived and exploited to construct DNA codes using Reed-Muller type of codes. 
In Section \ref{sec:dual and selfdual}, we discuss the dual and self dual codes over the ring $\mathcal{R}_\theta$ along with the necessary conditions such that a special class of constructed codes satisfies reverse and complement constraints, which can be considered as a rich source of efficient DNA codes.
Section \ref{sec:Bounds} includes theoretical bounds that are established considering the generalized Gau distance over the rings $\mathcal{R}_\theta$ along with the parameters of families of optimal codes corresponding to certain bounds.
Section \ref{sec:Conclusion} concludes the paper.

\section{Preliminaries on the rings $\mathcal{R}_{\theta}, \ \text{where} \ \theta \in \mathbb{Z}_4+ w \mathbb{Z}_4$}
\label{sec:Preliminaries}

Throughout the paper, we denote $\mathcal{R}_{\theta}= \mathbb{Z}_4+ w \mathbb{Z}_4$ where $w^2 = \theta\in\mathbb{Z}_4+ w \mathbb{Z}_4$.
There are $16$ such rings. 
These $16$ rings can be classified into two types, $i.e.$, chain rings and non-chain rings due to their different ring structures. 
One can observe that, the division into two classes is symmetric, $i.e.$, $8$ of these rings are chain rings and the rest are non-chain rings.
An explicit classification is given in Table \ref{class} for the rings $\mathcal{R}_{\theta}, \theta \in \mathbb{Z}_4 + w\mathbb{Z}_4$.

\begin{table}[ht]
\centering
\caption{The classification for the rings $\mathcal{R}_{\theta}, w^2= \theta$} 
\begin{tabular}{|l|l|}
\hline
Chain Rings & $\theta \in \{2, 3, 1+w, 3+w,1+2w, 2+2w, 1+3w, 3+3w\}$ \\ \hline 
Non-Chain Rings    &  $\theta \in \{0, 1, w, 2w, 3w, 2+w, 3+2w, 2+3w\}$  \\ \hline 
\end{tabular}
\label{class}
\end{table}

We discuss the structure of the $8$ finite commutative principal local chain rings $\mathcal{R}_{\theta}, \theta \in \mathbb{Z}_4+ w \mathbb{Z}_4$ in this section.
For given $\theta$, the set of zero divisors and the set of units of the ring $\mathcal{R}_\theta$ are denoted as $\mathcal{Z}_{\theta}$ and $\mathcal{U}_{\theta}$, respectively.
The set $\mathcal{R}^n_{\theta}$, along with the operations additions ``$+$" and multiplication ``$.$" can be viewed as a module over the ring $\mathcal{R}_{\theta}$. 
Some further classifications can be done in the class of chain rings in terms of basic ring structures.

\subsection{Ring $\mathcal{R}_{\theta}, \theta \in \{1+w, 3+w, 1+3w, 3+3w \}$.}

For $\mathcal{R}_{\theta}, \theta \in \{a+wb : a, b \in 2 \mathbb{Z}_4 +1 \} = \{1+w, 3+w, 1+3w, 3+3w \}$, we have,
$\mathcal{Z}_{1+w} = \mathcal{Z}_{3+w} = \mathcal{Z}_{1+3w} = \mathcal{Z}_{3+3w} = \{a+wb : a, b \in 2 \mathbb{Z}_4 \} = \{0, 2, 2w, 2+2w \}$; and
$\mathcal{U}_{1+w} = \mathcal{U}_{3+w} = \mathcal{U}_{1+3w} = \mathcal{U}_{3+3w} = \{a+wb : a, b \in \mathbb{Z}_4, a+b \in 2\mathbb{Z}_4+1 \} \cup \{a+wb : a, b \in 2\mathbb{Z}_4+1, a+b \in 2\mathbb{Z}_4 \} = \{1, 3, w, 3w, 1+w, 2+w, 3+w, 1+2w, 3+2w, 1+3w, 2+3w, 3+3w \}$.

The rings satisfy the chain condition with $3$ distinct ideals as, $\langle 0 \rangle \subset \langle 2 \rangle = \langle 2w \rangle = \langle 2+2w \rangle \subset \mathcal{R}_{\theta}$.

\subsection{Ring $\mathcal{R}_{\theta}, \theta \in \{2, 2+2w \}$.}

For $\mathcal{R}_{\theta}, \theta \in \{a+wb : a=2, b \in 2 \mathbb{Z}_4 \} = \{2, 2+2w \}$, we have,
$\mathcal{Z}_{2} = \mathcal{Z}_{2+2w} = \{a+wb : a \in 2 \mathbb{Z}_4, b \in \mathbb{Z}_4 \} = \{0, 2,w, 2w, 3w, 2+w, 2+2w, 2+3w \}$; and
$\mathcal{U}_{2} = \mathcal{U}_{2+2w} = \{a+wb : a \in 2\mathbb{Z}_4+1, b \in \mathbb{Z}_4 \} = \{1, 3, 1+w, 3+w, 1+2w, 3+2w, 1+3w, 3+3w \}$. 

These rings have $5$ distinct ideals as, $\langle 0 \rangle \subset \langle 2w \rangle \subset \langle 2 \rangle = \langle 2+2w \rangle \subset \langle w \rangle = \langle 3w \rangle = \langle 2+w \rangle = \langle 2+3w \rangle \subset \mathcal{R}_{\theta}$.

\subsection{Ring $\mathcal{R}_{\theta}, \theta \in \{3, 1+2w \}$.}

For $\mathcal{R}_{\theta}, \theta \in \{a+wb : a \in 2\mathbb{Z}_4+1, b \in 2 \mathbb{Z}_4, a \neq b+1 \} = \{3, 1+2w \}$, We have, 
$\mathcal{Z}_{3} = \mathcal{Z}_{1+2w} = \{a+wb : a,b \in \mathbb{Z}_4, a-b \in 2\mathbb{Z}_4 \} = \{0, 2, 2w, 1+w, 3+w, 2+2w, 1+3w, 3+3w \}$; and
$\mathcal{U}_{3} = \mathcal{U}_{1+2w} = \{a+wb : a,b \in \mathbb{Z}_4, a-b \in 2\mathbb{Z}_4+1 \} = \{1, 3, w, 3w, 2+w, 1+2w, 3+2w, 2+3w \}$.

The rings have $5$ distinct ideals as, $\langle 0 \rangle \subset \langle 2+2w \rangle \subset \langle 2 \rangle = \langle 2w \rangle \subset \langle 1+w \rangle = \langle 3+w \rangle = \langle 1+3w \rangle = \langle 3+3w \rangle \subset \mathcal{R}_{\theta}$.\\

Any subset of $\mathcal{R}^n_{\theta}$ is defined as a code $\mathscr{C}_{\theta} (n,M,d)$ over the ring $\mathcal{R}_{\theta}$ (for given $\theta$), which consists of $M$ distinct codewords, each of length $n$ and any two distinct codewords differ by distance at least $d$. 
Any submodule of the module $\mathcal{R}^n_{\theta}$ is considered as a linear code $\mathscr{C}_{\theta}$ over the ring $\mathcal{R}_{\theta}$. 
We denote the generating matrix of the linear code $\mathscr{C}_{\theta}$ over the ring $\mathcal{R}_{\theta}$ by $G_{\theta}$ and the row span of $G_{\theta}$ is denoted as $\langle G_{\theta} \rangle$.
The generator matrix of the linear code $\mathscr{C}_{\theta}$ over the ring $\mathcal{R}_{\theta}$ in standard form, is given by:

\begin{itemize}
    \item[a)] For $\theta \in \{1+w, 3+w, 1+3w, 3+3w \}$, the generating matrix $G_{\theta}$ in standard form, of the linear code $\mathscr{C}_{\theta}$ over the ring $\mathcal{R}_{\theta}$ will be,  
\begin{equation}
G_{\theta}=\left(
\begin{array}{ccc}
I_{k_0} & A_{0,1}   & A_{0,2}    \\
0       & 2I_{k_1} & 2A_{1,2}    \\
\end{array} \right),
\label{G_1}
\end{equation} 
where the matrices $A_{i,j}$ are defined over the ring $\mathcal{R}_{\theta}$ for $0 \leq i < j \leq 2$.
The matrix $G_\theta$ is called type $\{k_0,k_1\}$, and the linear code with the generator matrix $G_\theta$ is called type $\{k_0,k_1\}$ linear code.
The total number of codewords of the type $\{k_0,k_1\}$ linear code $\mathscr{C}_{\theta}$ will be $16^{k_0}4^{k_1}$, where $\theta \in \{1+w, 3+w, 1+3w, 3+3w \}$.
    \item[b)] For $\theta \in \{2, 2+2w \}$, the generating matrix $G_{\theta}$ in standard form of the linear code $\mathscr{C}_{\theta}$ over the ring $\mathcal{R}_{\theta}$ will be, 
\begin{equation}
G_{\theta}=\left(
\begin{array}{ccccc}
I_{k_0} & A_{0,1}   & A_{0,2}   & A_{0,3}   & A_{0,4}  \\
0       & wI_{k_1} & wA_{1,2} & wA_{1,3} & wA_{1,4} \\
0       & 0         & 2I_{k_2}  & 2A_{2,3}  & 2A_{2,4} \\
0       & 0         & 0         & 2wI_{k_3}  & 2wA_{3,4} \\
\end{array} \right),
\label{G_2}
\end{equation} 
where the matrices $A_{i,j}$'s are defined over the ring $\mathcal{R}_{\theta}$ as, $A_{i,j} = B^1_{i,j} + wB^2_{i,j} + 2B^3_{i,j} + 2wB^4_{i,j}$, where $\ B^k_{i,j}$ are binary matrices for $0 \leq i < j \leq 4$ and $k= 1,2,3,4$.
The linear code $\mathscr{C}_{\theta}$ with the generating matrix in the given form, is of type $\{k_0, k_1, k_2, k_3\}$ and has   $16^{k_0}8^{k_1}4^{k_2}2^{k_3}$ codewords in total, where $\theta \in \{2, 2+2w \}$.
    \item[c)] For $\theta \in \{3, 1+2w \}$, the generating matrix $G_{\theta}$ of the linear code $\mathscr{C}_{\theta}$ over the ring $\mathcal{R}_{\theta}$, in standard form will be the matrix given in (\ref{G_3}), where the matrices $A_{i,j}$'s are defined over the ring $\mathcal{R}_{\theta}$ as, $A_{i,j} = B^1_{i,j} + (1+w)B^2_{i,j} + 2B^3_{i,j} + (2+2w)B^4_{i,j}$, where $\ B^k_{i,j}$ are binary matrices for $0 \leq i < j \leq 4$ and $k= 1,2,3,4$.
From the generating matrix $G_{\theta}$ in the standard form, it can deduced that the linear code $\mathscr{C}_{\theta}$ is of type $\{k_0, k_1, k_2, k_3\}$ and has $16^{k_0}8^{k_1}4^{k_2}2^{k_3}$ codewords, where $\theta \in \{3, 1+2w \}$. 
\begin{equation}
G_{\theta}=\left(
\begin{array}{ccccc}
I_{k_0} & A_{0,1}   & A_{0,2}   & A_{0,3}   & A_{0,4}  \\
0  & (1+w)I_{k_1} & (1+w)A_{1,2} & (1+w)A_{1,3} & (1+w)A_{1,4} \\
0       & 0         & 2I_{k_2}  & 2A_{2,3}  & 2A_{2,4} \\
0       & 0         & 0         & (2+2w)I_{k_3}  & (2+2w)A_{3,4} \\
\end{array} \right)
\label{G_3}
\end{equation} 
\end{itemize}

\section{Generalized Gau Map and Generalized Gau Distance}
\label{sec:Gau}

We define a correspondence between the elements of the ring $\mathcal{R}_\theta$ and the DNA alphabets for the purpose of constructing DNA codes satisfying different constraints.
The correspondence leads to an isometry (Distance preserving map) between the codes over the ring $\mathcal{R}_\theta$ and the DNA codes in parallel.
In the case of DNA codes, the Hamming Distance has been defined, while for the codes over the ring $\mathcal{R}_\theta$, Generalized Gau Distance $d_{G(\theta)}$ over $\mathcal{R}_\theta$ (eventually over $\mathcal{R}^n _\theta$) has been introduced.
The isometry has been established simultaneously by defining Generalized Gau Map $\phi$ over the ring $\mathcal{R}_\theta$.
One can verify that $\phi$ is a bijective mapping between the elements of the ring $\mathcal{R}_\theta$ and the pair DNA nucleotides and it will be independent of the choice of $\theta$.
The arrangement of matrix given in (\ref{Gau Distance Matrix}) is managed in such a way that the Hamming distance between any two distinct pair of DNA neucleotides in different rows or different columns is $2$, otherwise it is $1$ and we fill the elements of the matrix using the elements of the ring $\mathcal{R}_\theta$ according to the conditions $\phi^{-1}(\phi(x)^r)=3x$ and $\phi^{-1}(\phi(x)^c)=x+\lambda$, where $\lambda \in \{2, 2w, 2+2w\}$.  
Using the conditions we obtain the following restrictions over the elements of the matrix given in (\ref{Gau Distance Matrix}).

\begin{enumerate}
    \item $a_{1,1},a_{2,2}\in\{x \in \mathcal{R}_\theta: 2x = 0\}$ such that $a_{1,1}\neq a_{2,2}$ and $a_{1,1}\neq a_{2,2}+\lambda$, 
    \item $a_{1,4},a_{2,3}\in\{x \in \mathcal{R}_\theta: 3x = x+ \lambda \}$ such that $a_{1,4}\neq a_{2,3}$ and $a_{1,4}\neq a_{2,3}+ \lambda$, and
    \item $a_{1,2},a_{1,3}\in\{x \in \mathcal{R}_\theta: x \neq 3x, 3x = x+ \lambda\}$ such that $a_{1,2}\neq a_{1,3}$, $a_{1,2}\neq a_{1,3}+\lambda$, $a_{1,2}\neq3a_{1,3}$, $a_{1,2}\neq 3a_{1,3}+\lambda$.
\end{enumerate}

We construct the following example of a specific Generalized Gau Map $\phi$, using the abovesaid restrictions, where $\lambda = 2+2w$ is considered.

\begin{table}[ht]
\centering
\caption{An example of $\phi$: $\mathcal{R}_\theta\rightarrow \Sigma_{DNA}^2$ is illustrated.}
\begin{tabular}{|c|c||c|c|}
\hline
Ring element & DNA image &   Ring element & DNA image \\ 
$x$    &  $\phi(x)$  & $x$     & $\phi(x)$   \\
\hline \hline 
$2+2w$   & $AA$        & $w$  & $AC$  \\ 
\hline
$0$  & $TT$ &  $3w$ & $CA$ \\
\hline
$2w$     &  $GG$       & $2+3w$  & $GT$       \\
\hline
$2$   & $CC$ & $2+w$ & $TG$ \\ 
\hline
$1$     &  $AG$       & $1+w$     & $AT$  \\ 
\hline
$3$  & $GA$ & $3+3w$ & $TA$ \\
\hline
 $3+2w$    &  $CT$       & $3+w$    & $GC$  \\ 
\hline
$1+2w$    & $TC$ & $1+3w$ & $CG$ \\
\hline
\end{tabular}
\label{Gau_Map}
\end{table}

For a similar map, readers are referred to Table I, \cite{DKBG}.
One can observe that, different Generalized Gau Maps can be constructed by swapping the elements in the matrix (\ref{Gau Distance Matrix}), according to the given conditions.       
This leads us to enumerate the total number of Generalized Gau map, which can be constructed in this approach and we have following result.

\begin{remark}
For a given $\theta\in\mathbb{Z}_4+w \mathbb{Z}_4$, $2^{11}$ distinct Generalized Gau mappings $\phi:\mathcal{R}_\theta \rightarrow \Sigma^2_{DNA}$ exist such that $\phi^{-1}(\phi(x)^r)=3x$ and $\phi^{-1}(\phi(x)^c)=x+\lambda$ where $\lambda \in \{2, 2w, 2+2w\}$.
\end{remark}

\begin{note}
In this work, we have considered $\lambda = 2+2w$.
\end{note}

Thus motivated, we define the Generalized Gau Distance $d_{G(\theta)}$ over $\mathcal{R}_\theta$ as,
\begin{equation}
\begin{array}{cc}
   \mathscr{M} = &  \begin{array}{cc}
         & 
\begin{array}{lcccccccccccccccccc}
            A & & & & G & & & & & & C & & & & & & T & &
        \end{array}  
        \\
        \begin{array}{c}
           A \\
            G \\
            C \\
            T 
        \end{array}  
        &
        \left(\begin{array}{cccc}
         a_{1,1} & a_{1,2} & a_{1,3} & a_{1,4}  \\
        3a_{1,2} & a_{2,2} & a_{2,3} & 3a_{1,3}+2+2w     \\
        3a_{1,3} & 3a_{2,3} & a_{2,2}+2+2w & 3a_{1,2}+2+2w \\
        3a_{1,4} & a_{1,3}+2+2w & a_{1,2}+2+2w & a_{1,1}+2+2w
\end{array}\right)
    \end{array}
\end{array}
\label{Gau Distance Matrix}
\end{equation}

For $x, y \in \mathcal{R}_\theta$, from Matrix (\ref{Gau Distance Matrix}), let $x = a_{i,j}$ and $y = a_{i^{'},j^{'}}$ are the elements of the matrix $\mathscr{M}$ for some $0 \leq i,j \leq 3$ and $0 \leq i^{'},j^{'} \leq 3 $ then, the Generalized Gau distance $d_{G(\theta)}$ can be defined as 
\begin{equation} 
d_{G(\theta)}(x, y) = \min\{1, i + 3i^{'}\} + \min\{1, j + 3j^{'}\},
\end{equation}
where sum of the indices is calculated over $\mathbb{Z}_4$.
For any $\textbf{x} =(x_1 \  x_2 \ldots x_n)$ and $\textbf{y} =(y_1\ y_2\ldots y_n)$ in $\mathcal{R}_\theta^n$, the Generalized Gau distance $d_{G(\theta)}(\textbf{x}, \textbf{y}) = \sum_{i=1}^n d_{G(\theta)}(x_i, y_i)$ is a metric on $\mathcal{R}_\theta^n$ induced by the metric on the elements of the ring $\mathcal{R}_\theta$. 
The same notation $d_{G(\theta)}$ is used for both the metrics on $\mathcal{R}_\theta$ and $\mathcal{R}_\theta^n$. 
For a linear code $\mathscr{C}_\theta$ on $\mathcal{R}_\theta$, one can define a minimum Generalized Gau distance  $d_{G(\theta)}$ = $\min\{d_{G(\theta)}(\textbf{x}, \textbf{y}): \textbf{x}, \textbf{y} \in \mathscr{C}_\theta$ and $\textbf{x} \neq \textbf{y} \}$. 

Using computation, one can verify that, $d_{G(\theta)}$ is a metric on the ring $\mathcal{R}_\theta$ for any $\theta\in\mathcal{R}_\theta$.
Combining the Generalized Gau Map $\phi$ and the Generalized Gau Distance $d_{G(\theta)}$, we have the following theorem.

\begin{theorem}
The Generalized Gau Map $\phi:(\mathcal{R}^n_\theta,d_{G(\theta)}) \rightarrow (\Sigma^{2n}_{DNA}, d_H)$ is an isometry, $i.e.$, a distance preserving map. 
\end{theorem}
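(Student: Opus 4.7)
\emph{Proof proposal.} My plan is to first verify the identity $d_{G(\theta)}(x,y) = d_H(\phi(x), \phi(y))$ at a single coordinate, and then to lift it to $\mathcal{R}_\theta^n$ using the additivity of both distances.

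For the single-coordinate step, recall that the matrix $\mathscr{M}$ displayed in (\ref{Gau Distance Matrix}) is indexed so that the row labels $A, G, C, T$ correspond to the first DNA nucleotide of $\phi(x)$ while the column labels $A, G, C, T$ correspond to the second nucleotide. In particular, writing $x = a_{i,j}$ and $y = a_{i',j'}$ as positions in $\mathscr{M}$ with $i,i',j,j' \in \{0,1,2,3\}$, I would observe that
\[
d_H(\phi(x), \phi(y)) \;=\; \mathbb{1}\{i \neq i'\} \;+\; \mathbb{1}\{j \neq j'\},
\]
since two length-two DNA strings differ in the first (resp.\ second) position exactly when their matrix rows (resp.\ columns) differ. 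On the ring side, because $3 \equiv -1 \pmod{4}$, one has $i + 3i' \equiv i - i' \pmod{4}$; this residue is $0$ when $i = i'$ and otherwise lies in $\{1,2,3\}$, so $\min\{1, \, i + 3i' \bmod 4\} = \mathbb{1}\{i \neq i'\}$, and likewise for the column indices. Substituting into the definition of $d_{G(\theta)}$ immediately yields $d_{G(\theta)}(x,y) = d_H(\phi(x), \phi(y))$ on $\mathcal{R}_\theta$.

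For the $n$-dimensional statement, the Generalized Gau distance is defined by $d_{G(\theta)}(\x,\y) = \sum_{k=1}^n d_{G(\theta)}(x_k,y_k)$, while the extended map $\phi(\x) = \phi(x_1)\phi(x_2)\cdots\phi(x_n)$ partitions the length-$2n$ DNA string into $n$ disjoint blocks of length two, so $d_H(\phi(\x), \phi(\y)) = \sum_{k=1}^n d_H(\phi(x_k), \phi(y_k))$. Summing the one-coordinate identity over $k = 1, \ldots, n$ then delivers the desired isometry.

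The main obstacle, modest as it is, is the prerequisite that the assignment $(i,j) \mapsto a_{i,j}$ really is a bijection between $\{0,1,2,3\}^2$ and $\mathcal{R}_\theta$, so that the equation ``$x = a_{i,j}$'' unambiguously determines the indices $(i,j)$. This follows from the restrictions (1)--(3) listed just before Table \ref{Gau_Map}, together with the explicit formulas (of the form $3a_{1,k}$ or $a_{r,s} + 2+2w$) filling in the remaining entries of $\mathscr{M}$: these force all sixteen entries to be pairwise distinct and hence to exhaust $\mathcal{R}_\theta$. Once that bookkeeping is in place, the remainder is the short modular arithmetic calculation in $\mathbb{Z}_4$ described above, which holds uniformly for every $\theta \in \mathbb{Z}_4 + w\mathbb{Z}_4$.
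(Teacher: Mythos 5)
Your proposal is correct and follows essentially the same route as the paper: establish the identity $d_{G(\theta)}(x,y)=d_H(\phi(x),\phi(y))$ on a single coordinate and then lift it to $\mathcal{R}_\theta^n$ via the coordinate-wise additivity of both metrics (the paper phrases the lift as an induction on $n$, which is equivalent to your direct summation). If anything, your write-up is more complete, since the paper merely asserts the base case while you supply the modular-arithmetic verification that $\min\{1,\,i+3i'\}=\mathbb{1}\{i\neq i'\}$ and note the needed bijectivity of $(i,j)\mapsto a_{i,j}$.
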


\begin{proof}
We will proof the result using Mathematical Induction over the parameter $n$. 

Base case: For any given $\theta\in\Z_4+w\Z_4$, the Generalized Gau Map $\phi:(\mathcal{R}_\theta,d_{G(\theta)}) \rightarrow (\Sigma^2_{DNA}, d_H)$ is a distance preserving map, $i.e.$, 
\[
d_{Gau}(x,y) = d_H(\phi(x),\phi(y))\mbox{ for any }x,y\in\mathcal{R}_\theta.
\]
Hypothesis: For given positive integer $m$, assume that the Generalized Gau Map $\phi:(\mathcal{R}^m_\theta,d_{G(\theta)}) \rightarrow (\Sigma^{2m}_{DNA}, d_H)$ is a distance preserving map, $i.e.$,
\[
d_{Gau}(\x,\y) = d_H(\phi(\x),\phi(\y))\mbox{ for any }\x,\y\in\mathcal{R}^m_\theta.
\]
Inductive step: For any $\x',\y'\in\mathcal{R}^{m+1}_\theta$, consider $\x,\y\in\mathcal{R}^m_\theta$ and $x_{m+1},y_{m+1}\in\mathcal{R}_\theta$ such that $\x'$ = $(\x\ x_{m+1})$ and $\y'$ = $(\y\ y_{m+1})$, where $\x$ = $(x_1\ x_2\ \ldots\ x_m)$ and $\y$ = $(y_1\ y_2\ \ldots\ y_m)$. 
Then, $\x'$ = $(x_1\ x_2\ \ldots\ x_m\ x_{m+1})$ and $\y'$ = $(y_1\ y_2\ \ldots\ y_m\ y_{m+1})$.
Now, 
\begin{equation*}
    \begin{split}
        d_{Gau}(\x',\y')  = & \sum_{i=1}^{m+1}d_{Gau}(x_i,y_i) \\
                        = & \left(\sum_{i=1}^md_{Gau}(x_i,y_i)\right)+d_{Gau}(x_{m+1},y_{m+1}) \\
                        = & d_{Gau}(\x,\y)+d_{Gau}(x_{m+1},y_{m+1}) \\
                        = & d_H(\phi(\x),\phi(\y))+d_H(\phi(x_{m+1}),\phi(y_{m+1})) \\
                        = & d_H(\x',\y').
    \end{split}
\end{equation*}
It follows the result.
\end{proof}

\begin{remark}
For any $\textbf{x} \in \mathscr{C}_\theta$ over the ring $\mathcal{R}_\theta$, $\phi^{-1}(\phi(\textbf{x})^r) \in  \mathscr{C}_\theta$ if and only if the DNA code $\phi(\mathscr{C}_\theta)$ is reversible. 
\label{reverse_c}
\end{remark}
\begin{remark}
For any $\textbf{x} \in \mathscr{C}_\theta$, $\phi^{-1}(\phi(\textbf{x})^c) \in  \mathscr{C}_\theta$ if and only if the DNA code $\phi(\mathscr{C}_\theta)$ is complement.
\label{complement_c}
\end{remark}
\begin{remark}
For any $\textbf{x} \in \phi(\mathscr{C}_\theta)$, if $\textbf{x}^r \in \phi(\mathscr{C}_\theta)$ and $\textbf{x}^c \in \phi(\mathscr{C}_\theta)$ then $\textbf{x}^{rc} \in \phi(\mathscr{C}_\theta) $. 
\label{r_rc_exist}
\end{remark}
\begin{remark}
for any $a$ and $b$ in $\mathcal{R}_\theta$, and any $\x$ = $(x_1\ x_2\ \ldots\ x_n)$ and $\y$ = $(y_1\ y_2\ \ldots\ y_n)$ in $\mathcal{R}^n_\theta$, consider
\begin{equation*}
    \begin{split}
        (a\x+b\y)^r = & \left(a(x_1\ x_2\ \ldots\ x_n)+b(y_1\ y_2\ \ldots\ y_n)\right)^r \\
                    = & \left((ax_1\ ax_2\ \ldots\ ax_n)+(by_1\ by_2\ \ldots\ by_n)\right)^r \\
                    = & (ax_1+by_1\ ax_2+by_2\ \ldots\ ax_n+by_n)^r \\
                    = & (ax_n+by_n\ ax_{n-1}+by_{n-1}\ \ldots\ ax_1+by_1) \\
                    = & (ax_n\ ax_{n-1}\ \ldots\ ax_1)+(by_n\ by_{n-1}\ \ldots\ by_1) \\
                    = & (a(x_n\ x_{n-1}\ \ldots\ x_1)+b(y_n\ y_{n-1}\ \ldots\ y_1) \\
                    = & a\x^r+b\y^r.
    \end{split}
\end{equation*}
Using similar approach, one can obtain that, for any positive integer $k$ and $1 \leq i \leq k$, if $\textbf{x}_i \in \mathcal{R}^n_\theta$, $(\sum_{i=1}^ka_i\textbf{x}_i)^r = \sum_{i=1}^ka_i\textbf{x}_i^r$, where $a_i\in \mathcal{R}_\theta$.
\label{reverse remark}
\end{remark}
\begin{lemma}
For any positive integer $k$ and $1 \leq i \leq k$, if $\textbf{x}_i \in \mathcal{R}^n_\theta$, then $\phi^{-1}(\phi(\sum_{i=1}^ka_i\textbf{x}_i)^r) = \sum_{i=1}^ka_i\phi^{-1}(\phi(\textbf{x}_i)^r)$, where $a_i\in \mathcal{R}_\theta$.
\label{reverse_l}
\end{lemma}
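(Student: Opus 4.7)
The plan is to combine two ingredients that are already essentially available: the pointwise identity $\phi^{-1}(\phi(x)^r)=3x$ that came out of the construction of the Gau map, and the distributive identity for the reverse operator over linear combinations that was verified in the preceding remark. First I would promote the pointwise identity to a statement about vectors, and then I would plug this into the linearity identity; nothing more than bookkeeping of indices should be required.

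The first step is to establish that $\phi^{-1}(\phi(\mathbf{x})^r)=3\mathbf{x}^r$ for every $\mathbf{x}=(x_1,\ldots,x_n)\in\mathcal{R}^n_\theta$. To see this I would write each $\phi(x_i)=(a_i\ b_i)\in\Sigma^2_{DNA}$, so that $\phi(\mathbf{x})=(a_1\ b_1\ a_2\ b_2\ \ldots\ a_n\ b_n)$ is a DNA string of length $2n$. Reversing this string yields $\phi(\mathbf{x})^r=(b_n\ a_n\ b_{n-1}\ a_{n-1}\ \ldots\ b_1\ a_1)$, which, when grouped into consecutive pairs, is exactly $(\phi(x_n)^r\ \phi(x_{n-1})^r\ \ldots\ \phi(x_1)^r)$. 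Applying $\phi^{-1}$ coordinatewise and using the single-letter identity $\phi^{-1}(\phi(x_i)^r)=3x_i$ gives $\phi^{-1}(\phi(\mathbf{x})^r)=(3x_n\ 3x_{n-1}\ \ldots\ 3x_1)=3\mathbf{x}^r$.

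With this vector-level identity in hand, the lemma is a two-line chain. I would compute
\[
\phi^{-1}\!\Bigl(\phi\Bigl(\sum_{i=1}^k a_i\mathbf{x}_i\Bigr)^{r}\Bigr)
=3\Bigl(\sum_{i=1}^k a_i\mathbf{x}_i\Bigr)^{r}
=3\sum_{i=1}^k a_i\mathbf{x}_i^r
=\sum_{i=1}^k a_i\bigl(3\mathbf{x}_i^r\bigr)
=\sum_{i=1}^k a_i\,\phi^{-1}(\phi(\mathbf{x}_i)^r),
\]
where the first equality is the vector identity just established, the second is Remark~\ref{reverse remark}, the third uses commutativity of $\mathcal{R}_\theta$ and distributivity of scalar multiplication, and the last applies the vector identity in reverse.

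The only place that requires a bit of care, and what I would call the main obstacle, is the indexing in the first step: one must be sure that reversing the length-$2n$ DNA string $\phi(\mathbf{x})$ actually produces the concatenation of the reverses $\phi(x_n)^r,\phi(x_{n-1})^r,\ldots,\phi(x_1)^r$ in that order, rather than a shifted or mismatched pairing. Once this is pinned down cleanly (for example by an explicit rewrite of the two-symbol blocks as above, or by a short induction on $n$ patterned on the proof of the isometry theorem), the rest of the argument is purely algebraic and uses only the commutative ring structure of $\mathcal{R}_\theta$.
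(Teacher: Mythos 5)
Your proof is correct and follows essentially the same route as the paper's: both establish the vector-level identity $\phi^{-1}(\phi(\mathbf{x})^r)=3\mathbf{x}^r$ from the single-letter rule $\phi^{-1}(\phi(x)^r)=3x$, then apply the linearity of the reverse operator (Remark~\ref{reverse remark}) and commutativity of $\mathcal{R}_\theta$ to conclude. Your explicit check of the pairing of two-symbol blocks under reversal is a detail the paper leaves implicit, but the argument is the same.
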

\begin{proof}
For any $x\in\mathcal{R}_\theta$, $\phi(x)^r$ = $\phi(3x)$, and therefore, for any $\x$ = $(x_1\ x_2\ \ldots\ x_n)\in\mathcal{R}^n_\theta$, $\phi(\x)^r$ = $\phi(3\x^r)$, where $\x^r$ = $(x_n\ x_{n-1}\ \ldots\ z_1)\in\mathcal{R}^n_\theta$. 
Now, 
\begin{equation*}
    \begin{split}
        \phi^{-1}\left(\phi\left(\sum_{i=1}^ka_i\textbf{x}_i\right)^r\right) & = \phi^{-1}\left(\phi\left(\sum_{i=1}^k3a_i\textbf{x}_i^r\right)\right) \\
        & = \sum_{i=1}^k3a_i\textbf{x}_i^r \\
        & = \sum_{i=1}^ka_i(3\textbf{x}_i^r) \\
        & = \sum_{i=1}^ka_i\phi^{-1}\left(\phi\left(\textbf{x}_i\right)^r\right) 
    \end{split}
\end{equation*}
Hence, it follows the result. 
\end{proof}
 


\begin{lemma}
For a given $\theta$ and for any row $\x \in G_\theta$ over the ring $\mathcal{R}_\theta$, the DNA code $\phi(\langle G_\theta \rangle)$ satisfies the reverse constraint if and only if $\phi^{-1}(\phi(x)^r) \in \langle G_\theta \rangle$, the row span of $G_\theta$ over $\mathcal{R}_\theta$.
\label{reverse}
\end{lemma}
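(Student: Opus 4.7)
The plan is to prove the two directions separately, with the forward direction being essentially immediate and the reverse direction relying crucially on Lemma~\ref{reverse_l}.

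For the ($\Rightarrow$) direction, suppose $\phi(\langle G_\theta \rangle)$ is a reversible DNA code. Since every row $\x$ of $G_\theta$ lies in $\langle G_\theta \rangle$, the DNA string $\phi(\x)$ is in $\phi(\langle G_\theta \rangle)$, so by the reverse-constraint hypothesis $\phi(\x)^r \in \phi(\langle G_\theta \rangle)$. Applying $\phi^{-1}$ (valid since $\phi$ is a bijection by the construction in Table~\ref{Gau_Map}) yields $\phi^{-1}(\phi(\x)^r) \in \langle G_\theta \rangle$.

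For the ($\Leftarrow$) direction, assume that for every row $\x_i$ of $G_\theta$ (say there are $k$ rows) one has $\phi^{-1}(\phi(\x_i)^r) \in \langle G_\theta \rangle$. Take any codeword $\y \in \langle G_\theta \rangle$. By definition of row span over $\mathcal{R}_\theta$, we can write $\y = \sum_{i=1}^k a_i \x_i$ for some coefficients $a_i \in \mathcal{R}_\theta$. Applying Lemma~\ref{reverse_l} gives
\[
\phi^{-1}(\phi(\y)^r) \;=\; \phi^{-1}\!\left(\phi\!\left(\sum_{i=1}^k a_i \x_i\right)^{\!r}\right) \;=\; \sum_{i=1}^k a_i\, \phi^{-1}(\phi(\x_i)^r).
\]
Each summand $\phi^{-1}(\phi(\x_i)^r)$ lies in $\langle G_\theta \rangle$ by the hypothesis on rows, so the $\mathcal{R}_\theta$-linear combination is again in $\langle G_\theta \rangle$. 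Therefore $\phi(\y)^r = \phi\!\left(\phi^{-1}(\phi(\y)^r)\right) \in \phi(\langle G_\theta \rangle)$, establishing that $\phi(\langle G_\theta \rangle)$ is closed under reversal.

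No step here should present a serious obstacle; the entire proof reduces to checking that the ``row condition'' propagates to arbitrary $\mathcal{R}_\theta$-linear combinations, which is exactly the content of Lemma~\ref{reverse_l}. The one point to state carefully is that $\phi$ being a bijection on symbols lifts to a bijection on tuples, so $\phi^{-1}$ is unambiguous and commutes with the forward argument, and that the coefficients $a_i$ may be any elements of $\mathcal{R}_\theta$ (not just $0/1$), which is why invoking Lemma~\ref{reverse_l} rather than a plain additive argument is essential.
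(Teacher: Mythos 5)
Your proof is correct and follows essentially the same route the paper intends: the paper's proof is only a sketch pointing to Remark~\ref{reverse_c} and Lemma~\ref{reverse_l} (and the analogous Lemma~2 of \cite{DKBG}), and your write-up simply makes that argument explicit --- the forward direction from bijectivity of $\phi$ and membership of the rows in the span, and the converse by propagating the row condition to arbitrary $\mathcal{R}_\theta$-linear combinations via Lemma~\ref{reverse_l}.
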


\begin{proof}
Using Remark \ref{reverse_c} and Lemma \ref{reverse_l}, the proof for general case will be similar to the proof given in Lemma 2, \cite{DKBG}, where $\theta =2+2w$ is considered.
\end{proof}

\begin{lemma}
For a given $\theta$ and for any row $\x \in G_\theta$ over the ring $\mathcal{R}_\theta$, the DNA code $\phi(\langle G_\theta \rangle)$ is complement if and only if $\mathbf{2+2w} \in \langle G_\theta \rangle$, where $\mathbf{2+2w}$ is a codeword in $\mathscr{C}_\theta$ with each entry $2 + 2w$.
\label{complement}
\end{lemma}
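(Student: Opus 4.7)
The plan is to reduce the complement property to the straightforward additive identity $\phi^{-1}(\phi(\x)^c) = \x + \mathbf{2+2w}$ and then use linearity of $\langle G_\theta \rangle$.

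First I would establish the key componentwise identity. By construction of the Generalized Gau Map (with the choice $\lambda = 2 + 2w$ fixed in the Note preceding this lemma), the defining restriction gives $\phi^{-1}(\phi(x)^c) = x + (2+2w)$ for every $x \in \mathcal{R}_\theta$. Since both $\phi$ (viewed pair-by-pair) and the complement map on $\Sigma_{DNA}^{2n}$ act coordinatewise, for any $\x = (x_1, x_2, \ldots, x_n) \in \mathcal{R}_\theta^n$ one obtains
\begin{equation*}
\phi^{-1}(\phi(\x)^c) = (x_1 + (2+2w),\ x_2 + (2+2w),\ \ldots,\ x_n + (2+2w)) = \x + \mathbf{2+2w}.
\end{equation*}
This is a routine verification, paralleling Remark \ref{reverse remark} but without the order reversal, so it should occupy only a line or two.

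Next I would prove the "if" direction. Suppose $\mathbf{2+2w} \in \langle G_\theta \rangle$. For any codeword $\x \in \langle G_\theta \rangle$, closure of the linear code under addition yields $\x + \mathbf{2+2w} \in \langle G_\theta \rangle$. Combined with the identity above, this gives $\phi^{-1}(\phi(\x)^c) \in \langle G_\theta \rangle$ for every $\x \in \langle G_\theta \rangle$. Invoking Remark \ref{complement_c}, the DNA code $\phi(\langle G_\theta \rangle)$ is complement.

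For the "only if" direction, assume $\phi(\langle G_\theta \rangle)$ is complement. By Remark \ref{complement_c}, $\phi^{-1}(\phi(\x)^c) \in \langle G_\theta \rangle$ for every $\x \in \langle G_\theta \rangle$. Specializing to $\x = \mathbf{0} \in \langle G_\theta \rangle$ and using the identity, we obtain $\mathbf{0} + \mathbf{2+2w} = \mathbf{2+2w} \in \langle G_\theta \rangle$, as desired. There is no genuine obstacle here; the only subtlety is the componentwise identity, which is immediate from the setup of $\phi$ and the choice $\lambda = 2+2w$, so the proof is essentially a one-line application of linearity in each direction. The asymmetry with Lemma \ref{reverse} (where a per-row condition is needed) comes precisely from the fact that complementation is an affine shift by a constant vector, while reversal is not.
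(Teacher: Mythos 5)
Your proof is correct and follows the same route the paper intends: the paper's own proof merely defers to Remark \ref{complement_c} together with the special case $\theta=2+2w$ treated in the cited reference, and what you have written out --- that the defining restriction $\phi^{-1}(\phi(x)^c)=x+\lambda$ with $\lambda=2+2w$ extends coordinatewise to $\phi^{-1}(\phi(\x)^c)=\x+\mathbf{2+2w}$, after which the forward direction is closure of the linear code under addition and the converse follows by taking $\x=\mathbf{0}$ --- is exactly that argument made explicit. No gaps.
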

For more general case of Lemma \ref{complement}, we have the following remark.

\begin{remark}
For a given $\theta$ and for any row $\x \in G_\theta$ over the ring $\mathcal{R}_\theta$, the DNA code $\phi(\langle G_\theta \rangle)$ is complement if and only if $\bm{\lambda} \in \langle G_\theta \rangle$, where $\bm{\lambda}$ is a codeword in $\mathscr{C}_\theta$ with each entry $\lambda\in\{2, 2w, 2+2w\}$.
\end{remark}

\begin{proof}
Using Remark \ref{complement_c} and Table \ref{Gau_Map}, the proof will be similar as given for the special case for $\theta = 2+2w$ in Lemma 3, \cite{DKBG}.
\end{proof}
Using Lemma \ref{reverse_l}, Lemma \ref{reverse} and Lemma \ref{complement}, the following theorem gives an insight to the parameters of the constructed DNA code with specific properties.     

\begin{remark}
For a given $\theta$, suppose $\mathscr{C}_\theta (n, M, d_{G(\theta)})$ is a code over the ring $\mathcal{R}_\theta$ with length $n$, the number of the codewords $M$ and the minimum Gau distance $d_{G(\theta)}$, such that the rows of the generator matrix of $\mathscr{C}_\theta$ satisfies the conditions given in the Lemma \ref{reverse} and Lemma \ref{complement}, then $\phi(\mathscr{C}_\theta)$ is a $\mathscr{C}^\theta _{DNA}(2n, M, d_H)$ DNA code with the length $2n$, the number of the codewords $M$ and the minimum Hamming distance $d_H$. 
The DNA code $\mathscr{C}^\theta _{DNA}$ is reversible and reversible-complement.
\label{isometry}
\end{remark}

Using Remark \ref{isometry}, we propose the following constructions of the families of DNA codes that satisfy the reverse and reverse-complement constraints from the the $r^{th}$-order Reed-Muller type codes over the different classes of rings $\mathcal{R}_\theta$.   

For integers $m$ and $r$ ($0\leq r\leq m$), the Reed Muller code is denoted as $\mathcal{R}(r,m)$. 
In this paper, for integers $m$ and $r$ ($0\leq r\leq m$), we define the Reed Muller Type code $\mathcal{R}(r,m)$ over the ring $\mathcal{R}_\theta$ for any $\theta\in\mathcal{R}_\theta$. 
The DNA codes obtained from the Reed Muller Type code satisfies the reverse and reverse-complement constraints. \\
For given $\theta$ and a zero divisor $z$ in the ring $\mathcal{R}_\theta$, the $r^{th}$-order Reed Muller Type code $\mathcal{R}(r,m)$ over $\mathcal{R}_\theta$ is given by the generator matrix 
\[
G_{r,m} =
\begin{pmatrix}
G_{r,m-1} & G_{r,m-1}    \\ 
0 & G_{r-1,m-1} \\
\end{pmatrix},\ 1 \leq r \leq m-1.
\]
with
\[
 G_{m,m}\ = 
 \left(
 \begin{array}{c}
      G_{m-1,m}  \\
      0\ 0\ldots\ 0\ z
 \end{array}
 \right)
 \]
and $G_{0,m}\ = \left(1\ 1\ldots 1\right)$ the all one matrix with length $2^m$.
\begin{theorem}
For the $r^{th}$-order Reed-Muller code $\mathcal{R}(r,m)$ of length $2^m$ over the ring $R_\theta$ ($\theta\in\{2,2+2w\}$), the DNA code $\mathscr{C}_{DNA}=\phi(\mathcal{R}(r,m))$ satisfies the reverse and reverse-complement constraints. 
The $(n,M,d_H)$ parameters of the DNA code $\mathscr{C}_{DNA}$ are $n$ = $2^{m+1}$, 
\[
M = \left \{ \,
\begin{array}{lll}
2^{4b-3a} & if & z \in \{2w\},\\
2^{4b-2a} & if & z \in \{2,2+2w\}, \\
2^{4b-a} & if & z\in \{w,2+w,3w,2+3w\}, \\
\end{array}
\right.\\
\] and
\[
d_{H} = \left \{ \,
\begin{array}{lll}
2^{m-r+1} & if & z \in \{2w,2,2+2w\},\\
2^{m-r} & if & z \in \{w,2+w,3w,2+3w\}, \\
\end{array}
\right. \\
\] where $a = \sum_{i=0}^{r-1}\binom{m-1}{i}$ and $b = \sum_{i=0}^r\binom{m}{i}$.
\label{RM 1}
\end{theorem}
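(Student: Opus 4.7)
The plan is to prove the four parts of the statement separately, driving each by induction on $m$ along the Plotkin-type recursion defining $G_{r,m}$. The length claim $n=2^{m+1}$ is immediate because $\phi$ doubles the length of a ring codeword and $G_{r,m}$ has $2^m$ columns by construction.

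For the reverse and reverse-complement constraints, by Remark \ref{r_rc_exist} it is enough to verify closure under reverse and under complement separately; Lemmas \ref{reverse} and \ref{complement} translate these into the statements (i) $3\x^{r}\in\langle G_{r,m}\rangle$ for every row $\x$ of $G_{r,m}$, and (ii) $\mathbf{2+2w}\in\langle G_{r,m}\rangle$. Unrolling the recursion shows that the first row of $G_{r,m}$ is always $G_{0,m}=(1\ 1\ \ldots\ 1)$, whence (ii) follows by scaling that row by $2+2w$. For (i) I proceed by induction on $m$: the all-ones row is palindromic; for the rows of $G_{r,m-1}$-type $(\u\ \u)$ the reversal $(\u^{r}\ \u^{r})$ lies in the top block by the inductive hypothesis applied to $\u^{r}\in\langle G_{r,m-1}\rangle$; and for the rows $(\mathbf{0}\ \v)$ of $G_{r-1,m-1}$-type the reversal $(\v^{r}\ \mathbf{0})$ is produced via the identity $(\v^{r}\ \mathbf{0})=(\v^{r}\ \v^{r})-(\mathbf{0}\ \v^{r})$ together with the containment $\langle G_{r-1,m-1}\rangle\subseteq\langle G_{r,m-1}\rangle$, which embeds $(\v^{r}\ \v^{r})$ inside the top block. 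The reversal of the special $z$-row in $G_{m,m}$ is handled by a short explicit computation combining the all-ones row with the remaining rows of $G_{m-1,m}$.

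To count $M$, I call a generating row of $G_{r,m}$ a $z$-row if its nonzero entries are multiples of $z$ and a unit row otherwise, and let $N(r,m)$ be the number of $z$-rows. The recursion yields $N(r,m)=N(r,m-1)+N(r-1,m-1)$ for $1\leq r\leq m-1$, $N(m,m)=N(m-1,m)+1$ and $N(0,m)=0$, and Pascal's identity gives $N(r,m)=a$ together with a total row count of $b$. Because the rows are $\mathcal{R}_{\theta}$-independent in the module sense, each unit row contributes a factor $|\mathcal{R}_{\theta}|=16$ to $M$ and each $z$-row contributes $|\mathcal{R}_{\theta}|/|\mathrm{Ann}(z)|$. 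For $\theta\in\{2,2+2w\}$, the ideal structure recorded in Section \ref{sec:Preliminaries} gives $|\mathrm{Ann}(2w)|=8$, $|\mathrm{Ann}(2)|=|\mathrm{Ann}(2+2w)|=4$ and $|\mathrm{Ann}(z)|=2$ for $z\in\{w,2+w,3w,2+3w\}$, producing the three formulas $16^{b-a}\cdot 2^{a}=2^{4b-3a}$, $16^{b-a}\cdot 4^{a}=2^{4b-2a}$ and $16^{b-a}\cdot 8^{a}=2^{4b-a}$.

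For the minimum Hamming distance, the isometry reduces the problem to computing the minimum Gau weight of a nonzero codeword of $\mathcal{R}(r,m)$. I intend to do this by induction on $m$ using the Plotkin decomposition $\mathcal{R}(r,m)=\{(\u,\u+\v):\u\in\mathcal{R}(r,m-1),\ \v\in\mathcal{R}(r-1,m-1)\}$: a nonzero codeword with $\v=\mathbf{0}$ has Gau weight $2\,w_{G(\theta)}(\u)\geq 2\,d(\mathcal{R}(r,m-1))$, while one with $\v\neq\mathbf{0}$ has weight at least $w_{G(\theta)}(\v)\geq d(\mathcal{R}(r-1,m-1))$ by a componentwise triangle inequality. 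Inspecting Table \ref{Gau_Map}, the minimum Gau weight $d_z$ of a nonzero element of $\langle z\rangle$ equals $2$ when $z\in\{2w,2,2+2w\}$ and $1$ when $z\in\{w,2+w,3w,2+3w\}$, and the recursion unfolds to $d=2^{m-r}d_z$, matching the two stated cases. The main technical obstacle is the lower bound on this minimum: since $\mathcal{R}_{\theta}$ contains units of Gau weight $1$, multiplication by such a unit can reduce the Gau weight of a codeword in a potentially troublesome way, so the naive Plotkin argument is not automatically tight over a ring; ruling out codewords of weight strictly below $2^{m-r}d_z$ will require a careful accounting of how the ideal $\langle z\rangle$ interacts with the weight-one units, and this is the crux of the argument.
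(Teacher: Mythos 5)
Your route is the same one the paper takes: Theorem \ref{RM 1} is proved there only as a three-item sketch (the row count $b$, the count $a$ of rows containing $z$, and the bare assertion that distinct codewords differ in more than $2^{m-r}-1$ coordinates), combined with Remark \ref{isometry}. Your row-counting recursion for $N(r,m)$, the annihilator computation giving the three formulas for $M$, and the closure under reverse and complement via Lemmas \ref{reverse} and \ref{complement} supply details the paper omits and are sound. The gap is exactly where you say it is, but it is worse than a missing technical lemma, for two reasons. First, the reduction of the minimum Gau distance to a minimum Gau \emph{weight} is not available: $d_{G(\theta)}$ is not translation-invariant. From Table \ref{Gau_Map}, $d_{G(\theta)}(0,2+3w)=d_H(TT,GT)=1$ while $d_{G(\theta)}(1,1+(2+3w))=d_{G(\theta)}(1,3+3w)=d_H(AG,TA)=2$; the paper itself defers the definition of a Gau weight to future work. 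So the Plotkin recursion has to be run on pairs of codewords, with the contribution of each coordinate tracked coset-by-coset rather than through a weight function.

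Second, the distance formula you are trying to reach cannot be established, because it fails on a small instance. Take $\theta=2$, $z=2$, $m=2$, $r=1$, so $G_{1,2}$ has rows $(1\ 1\ 1\ 1)$, $(0\ 2\ 0\ 2)$, $(0\ 0\ 1\ 1)$. Since $\mathcal{R}(1,2)$ is the $\mathcal{R}_\theta$-module spanned by these rows, it contains $(2+3w)\cdot(0\ 0\ 1\ 1)=(0\ 0\ 2{+}3w\ 2{+}3w)$, whose image under $\phi$ is at Hamming distance $2$ from $\phi(\mathbf{0})$ because $d_{G(\theta)}(0,2+3w)=1$; the theorem claims $d_H=2^{m-r+1}=4$. (The same phenomenon kills the unfolding $d=2^{m-r}d_z$ at the base case $r=0$, where $\langle\mathbf{1}\rangle$ contains $\mathbf{0}$ and $(2+3w)\mathbf{1}$ at distance $2^m$, not $2^{m+1}$.) This is precisely the interaction between the ideal $\langle z\rangle$ and the distance-one unit multiples that you flagged as the crux: it does not merely complicate the lower bound, it refutes it for $z\in\{2w,2,2+2w\}$. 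The paper's own bullet point only yields $d_H\geq 2^{m-r}$ in any case, so neither its sketch nor your proposal can deliver the stated $2^{m-r+1}$ without modifying the theorem.
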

\begin{proof}
The proof follows from the Remark \ref{isometry}, symmetry in generator matrix, and the following facts.
\begin{itemize}
    \item The total number of rows of the generator matrix $G_{r,m}$ is $\sum_{i=0}^r\binom{m}{i}$.
    \item The total number of rows of the generator matrix $G_{r,m}$ that has the element $z$ is $\sum_{i=0}^{r-1}\binom{m-1}{i}$.
    \item Any two codewords in the $r^{th}$-order Reed Muller code $\mathcal{R}_{r,m}$ over $\mathcal{R}_\theta$ are differ at more then $2^{m-r}-1$ positions.
\end{itemize}
\end{proof}

\begin{theorem}
For the $r^{th}$-order Reed-Muller code $\mathcal{R}(r,m)$ of length $2^m$ over the ring $R_\theta$ ($\theta\in\{1+w,3+w,1+3w,3+3w\}$), the DNA code $\mathscr{C}_{DNA}=\phi(\mathcal{R}(r,m))$ satisfy the reverse and reverse-complement constraints. 
The $(n,M,d_H)$ parameters of the DNA code $\mathscr{C}_{DNA}$ are $n$ = $2^{m+1}$, $M$ = $4^{2b-a}$ and $d_H$ = $2^{m-r+1}$, where $a = \sum_{i=0}^{r-1}\binom{m-1}{i}$ and $b = \sum_{i=0}^r\binom{m}{i}$.
\label{RM 2}
\end{theorem}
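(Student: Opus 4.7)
The plan is to mirror the proof of Theorem \ref{RM 1}, adapted to the smaller ideal lattice of $\mathcal{R}_\theta$ for $\theta\in\{1+w,3+w,1+3w,3+3w\}$. For these rings the ideal chain is just $\langle 0\rangle\subset\langle 2\rangle\subset\mathcal{R}_\theta$, so every admissible zero divisor $z\in\{2,2w,2+2w\}$ has annihilator $\langle 2\rangle$ of size $4$. I will (a) verify the reverse and complement constraints demanded by Remark \ref{isometry}, (b) count the codewords to obtain $M=4^{2b-a}$, and (c) establish the minimum Gau distance $d_{G(\theta)}=2^{m-r+1}$, which Remark \ref{isometry} then carries over to the Hamming distance on the DNA side.

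For step (a), I would induct on $m$. The palindromic block form $G_{r,m}=\left(\begin{smallmatrix}G_{r,m-1} & G_{r,m-1}\\ 0 & G_{r-1,m-1}\end{smallmatrix}\right)$ together with the trailing placement of $z$ in $G_{m,m}$ causes the reverse map on each row of $G_{r,m}$ to decompose into reverse maps on rows of $G_{r,m-1}$ and $G_{r-1,m-1}$; Lemma \ref{reverse_l} and the inductive hypothesis return these reversed rows inside $\langle G_{r,m}\rangle$, and Lemma \ref{reverse} delivers the reverse constraint. For the complement constraint, the all-ones vector $\mathbf{1}$ of length $2^m$ is the generator of $G_{0,m}$ and persists in $\langle G_{r,m}\rangle$ for every $r\ge 0$ via the top-left block of the recursion, so $(2+2w)\mathbf{1}\in\langle G_{r,m}\rangle$ and Lemma \ref{complement} applies; Remark \ref{r_rc_exist} then supplies the reverse-complement property.

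For step (b), I would classify the rows of $G_{r,m}$ as either unit-led (first nonzero entry equal to $1$) or $z$-led (only nonzero entries equal to $z$). The standard Reed--Muller tally gives $b=\sum_{i=0}^r\binom{m}{i}$ rows in total, and an induction tracking the single $(0\ \dots\ 0\ z)$ row of $G_{m,m}$ through the lower block at every recursive step gives $a=\sum_{i=0}^{r-1}\binom{m-1}{i}$ $z$-led rows. Over $\mathcal{R}_\theta$ a unit-led row spans a cyclic submodule of size $|\mathcal{R}_\theta|=16$, while a $z$-led row spans one of size $|\mathcal{R}_\theta\,z|=|\langle 2\rangle|=4$; the staircase pivot structure of $G_{r,m}$ makes these submodules $\mathcal{R}_\theta$-independent, giving $M=16^{b-a}\cdot 4^a=4^{2(b-a)+a}=4^{2b-a}$.

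For step (c), I would run a Plotkin-style induction on the $(\u\ |\ \u+\v)$ decomposition with $\u\in\mathcal{R}(r,m-1)$ and $\v\in\mathcal{R}(r-1,m-1)$: a triangle-inequality bound on the Gau norm yields $w_{G(\theta)}((\u\ |\ \u+\v))\ge \min\{2\,w_{G(\theta)}(\u),\,w_{G(\theta)}(\v)\}$, and feeding in the inductive lower bounds on $\mathcal{R}(r,m-1)$ and $\mathcal{R}(r-1,m-1)$ unwinds to $d_{G(\theta)}(\mathcal{R}(r,m))\ge 2^{m-r+1}$; a codeword attaining this bound is the $(2+2w)$-scaled image of a minimum-weight binary Reed--Muller word. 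Remark \ref{isometry} then converts the Gau bound to the DNA Hamming bound $d_H=2^{m-r+1}$. The main obstacle sits here: inspection of Table \ref{Gau_Map} reveals that the Gau weight on $\mathcal{R}_\theta$ is not constant on nonzero elements (six elements of the form $a+wb$ with $a,b$ of mixed parity have Gau weight $1$, while the remaining nine nonzero elements have weight $2$), so the naive one-per-nonzero-coordinate lower bound is inadequate and the induction must be phrased at the level of Gau weights rather than supports. Ruling out the possibility that a scalar of Gau weight $1$ depresses the weight of an otherwise minimum-weight codeword is the delicate step, particularly in the base cases $r=0$ and $r=m$ where the extremal codewords lie along the $\mathcal{R}_\theta$-scaled all-ones row and the $\mathcal{R}_\theta$-scaled trailing-$z$ row, respectively.
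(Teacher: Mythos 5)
Your steps (a) and (b) are correct and are, in substance, exactly what the paper's much terser proof appeals to: Remark~\ref{isometry} together with the symmetry of $G_{r,m}$ for the reverse and complement constraints, and the two row counts $b$ and $a$ combined with $|\langle z\rangle|=|\langle 2\rangle|=4$ to give $M=16^{b-a}\cdot 4^{a}=4^{2b-a}$. Your elaboration of the reversal argument via the $(\u\mid\u+\v)$ decomposition and of the independence of the pivoted submodules is more detail than the paper supplies, but it is the same route.

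The genuine gap is in step (c), and you have in fact put your finger on it without drawing the conclusion: the ``delicate step'' you defer cannot be completed, because the claimed minimum distance is not correct for these rings. Under the Gau map of Table~\ref{Gau_Map} one has $d_{G(\theta)}(0,1+w)=d_H(TT,AT)=1$, and $1+w$ is a \emph{unit} in $\mathcal{R}_\theta$ for $\theta\in\{1+w,3+w,1+3w,3+3w\}$. Let $\mathbf{g}$ be any unit-led generator row of $G_{r,m}$ of binary weight $2^{m-r}$ (for $r=0$ take $\mathbf{g}=\mathbf{1}$; for $1\le r\le m-1$ such a row sits in the bottom block of the recursion). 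Then $(1+w)\mathbf{g}\in\mathcal{R}(r,m)$ has Gau weight exactly $2^{m-r}\cdot 1=2^{m-r}$, so the DNA code contains two codewords at Hamming distance $2^{m-r}$, half the claimed $2^{m-r+1}$. The paper's own justification --- ``any two codewords differ in at least $2^{m-r}$ positions'' --- likewise only yields $d_H\ge 2^{m-r}$ and silently assumes that every differing coordinate contributes Hamming distance $2$ on the DNA side; that assumption holds for coordinate differences lying in $\{2,2w,2+2w\}$ but fails for unit differences of Gau weight one, which is precisely why your Plotkin-style induction cannot close. (A minor secondary slip: the six Gau-weight-one elements are $1+w$, $3+3w$, $2+w$, $2+3w$, $1+2w$, $3+2w$, and two of these have both components odd, so they are not all of ``mixed parity''; this does not affect your main point.) As stated, then, neither your argument nor the paper's establishes $d_H=2^{m-r+1}$; the correct value obtainable by these methods is $2^{m-r}$.
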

\begin{proof}
Similar to Theorem \ref{RM 1}, the proof follows from the Remark \ref{isometry}, symmetry in generator matrix, and the following facts.
\begin{itemize}
    \item The total number of rows of the generator matrix $G_{r,m}$ is $\sum_{i=0}^r\binom{m}{i}$.
    \item The total number of rows of the generator matrix $G_{r,m}$ that has the element $z$ is $\sum_{i=0}^{r-1}\binom{m-1}{i}$.
    \item Any two codewords in the $r^{th}$-order Reed Muller code $\mathcal{R}_{r,m}$ over $\mathcal{R}_\theta$ are differ at more then $2^{m-r}-1$ positions.
\end{itemize}
\end{proof}

\begin{theorem}
For the $r^{th}$-order Reed-Muller code $\mathcal{R}(r,m)$ of length $2^m$ over the ring $R_\theta$ ($\theta\in\{3,1+2w\}$), the DNA code $\mathscr{C}_{DNA}=\phi(\mathcal{R}(r,m))$ satisfies the reverse and reverse-complement constraints. 
The $(n,M,d_H)$ parameters of the DNA code $\mathscr{C}_{DNA}$ are $n$ = $2^{m+1}$, 
	\[
	M = \left \{ \,
	\begin{array}{lll}
	2^{4b-3a} & if & z \in \{2+2w\},\\
	2^{4b-2a} & if & z \in \{2,2w\}, \\
	2^{4b-a} & if & z\in \{1+w,3+w,1+3w,3+3w\}, \\
	\end{array}
	\right.\\
	\] and
	\[
	d_{H} = \left \{ \,
	\begin{array}{lll}
	2^{m-r+1} & if & z \in \{2,2w,2+2w\},\\
	2^{m-r} & if & z \in \{1+w,3+w,1+3w,3+3w\}, \\
	\end{array}
	\right. \\
	\] where $a = \sum_{i=0}^{r-1}\binom{m-1}{i}$ and $b = \sum_{i=0}^r\binom{m}{i}$.
\end{theorem}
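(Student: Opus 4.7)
My plan is to mirror the proofs of Theorems \ref{RM 1} and \ref{RM 2}, changing only the enumeration step to account for the different ideal lattice of $\mathcal{R}_\theta$ when $\theta\in\{3,1+2w\}$. The three tasks are: verifying that $\phi(\mathcal{R}(r,m))$ is reversible and reversible-complement via Remark \ref{isometry}, enumerating the codewords, and computing the minimum Hamming distance.

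For reversibility, by Lemma \ref{reverse_l} it suffices to show $3\x^r\in\langle G_{r,m}\rangle$ for every row $\x$ of $G_{r,m}$. I would induct on $m$ using the block form of $G_{r,m}$. A top-block row $(\a,\a)$ reverses to $(\a^r,\a^r)$, and the inductive hypothesis applied to $G_{r,m-1}$ gives $3\a^r\in\langle G_{r,m-1}\rangle$, hence $(3\a^r,3\a^r)\in\langle G_{r,m}\rangle$. A bottom-block row $(0,\b)$ reverses to $(\b^r,0)$; using $\langle G_{r-1,m-1}\rangle\subseteq\langle G_{r,m-1}\rangle$ together with the inductive hypothesis, I can write $(3\b^r,0)=(3\b^r,3\b^r)-(0,3\b^r)\in\langle G_{r,m}\rangle$. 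For the complement condition, the all-ones vector $G_{0,m}=(1,1,\ldots,1)$ lies in $\langle G_{r,m}\rangle$ through the iterated top-block inclusion, and multiplying by $2+2w$ produces the all-$(2+2w)$ codeword required by Lemma \ref{complement}.

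The enumeration proceeds from the observation that $G_{r,m}$ has $b=\sum_{i=0}^r\binom{m}{i}$ rows, of which $a=\sum_{i=0}^{r-1}\binom{m-1}{i}$ carry the zero-divisor entry $z$; the remaining $b-a$ rows are ``free''. By the block-triangular standard form (\ref{G_3}), $M=16^{b-a}\,|\langle z\rangle|^{a}$. For $\theta\in\{3,1+2w\}$ the ideal chain recorded in Section \ref{sec:Preliminaries} gives $|\langle 2+2w\rangle|=2$, $|\langle 2\rangle|=|\langle 2w\rangle|=4$, and $|\langle 1+w\rangle|=\cdots=|\langle 3+3w\rangle|=8$, yielding $M=2^{4b-3a}$, $2^{4b-2a}$, $2^{4b-a}$ in the three stated cases.

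For the minimum distance, the Reed--Muller recursion guarantees a minimum separation in at least $2^{m-r}$ ring coordinates; when $z\in\{2,2w,2+2w\}$ the minimum-weight contributions come from small-ideal elements whose Gau weight is $2$, giving Gau distance $2^{m-r+1}$, while when $z\in\{1+w,3+w,1+3w,3+3w\}$ units of Gau weight $1$ can appear, giving $2^{m-r}$. Applying Remark \ref{isometry} then transports these to Hamming distances on $\phi(\mathcal{R}(r,m))$, and $n=2^{m+1}$ follows from $\phi$ doubling each coordinate. The main obstacle I anticipate is this last distance computation: one must verify that the claimed minima are actually attained (and not larger due to cancellation among rows), which requires a careful case analysis of how the $z$-rows arising from $G_{m,m}$ interact with the ``free'' rows inherited from the $G_{r,m-1}$ blocks inside this particular ring.
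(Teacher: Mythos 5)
Your proposal follows essentially the same route as the paper, whose own proof is only a three-bullet sketch citing Remark \ref{isometry}, the symmetry of the generator matrix $G_{r,m}$, the row counts $a$ and $b$, and the fact that distinct codewords differ in at least $2^{m-r}$ coordinates; your expanded reversibility induction on the block structure, the enumeration $M=16^{b-a}\,|\langle z\rangle|^{a}$ using the ideal sizes $2,4,8$ from the chain for $\theta\in\{3,1+2w\}$, and the Gau-weight case split for $d_H$ supply strictly more detail than the paper records. The loose end you flag about confirming the minimum distance is actually attained is real, but the paper leaves it equally unaddressed.
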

\begin{proof}
Similar to Theorem \ref{RM 1} and Theorem \ref{RM 2}, the proof follows from the Remark \ref{isometry}, symmetry in generator matrix, and the following facts.
\begin{itemize}
    \item The total number of rows of the generator matrix $G_{r,m}$ is $\sum_{i=0}^r\binom{m}{i}$.
    \item The total number of rows of the generator matrix $G_{r,m}$ that has the element $z$ is $\sum_{i=0}^{r-1}\binom{m-1}{i}$.
    \item Any two codewords in the $r^{th}$-order Reed Muller code $\mathcal{R}_{r,m}$ over $\mathcal{R}_\theta$ are differ at more then $2^{m-r}-1$ positions.
\end{itemize}
\end{proof}

\section{Dual and Self Dual Codes over the rings $\mathcal{R}_{\theta}$}
\label{sec:dual and selfdual}
In this section we study certain specifications of dual codes and establish sufficient conditions of self dual codes which satisfy the combinatorial constraints over the ring $\mathcal{R}_\theta$.  

For the rest of the paper, Standard Inner Product or Euclidean Inner Product on the ring $\mathcal{R}_\theta$ has been considered. For two vectors $\mathbf{x}= (x_1 \ x_2 \ \ldots \ x_n)$ and $\mathbf{y}= (y_1 \ y_2 \ \ldots \ y_n)$, where $\mathbf{x}, \mathbf{y} \in \mathcal{R}^n _\theta$, the inner product is defined as $[\mathbf{x}, \mathbf{y}]= \sum\limits_{i=1}^n x_i y_i$ over the ring $\mathcal{R}_\theta$.

The dual code $\mathscr{C}^\perp _\theta$ of the code $\mathscr{C}_\theta$ over the ring $\mathcal{R}_\theta$ is defined as the set of annihilators of each of the codewords of the code $\mathscr{C}_\theta$, $i.e., \mathscr{C}^\perp _\theta = \{\mathbf{x} \in \mathcal{R}^n_\theta : [\mathbf{x}, \mathbf{y}] = 0 , \forall \mathbf{y} \in \mathscr{C}_\theta \}$.
A code $\mathscr{C}_\theta$ over the ring $\mathcal{R}_\theta$ will be called self dual, whenever $\mathscr{C}_\theta = \mathscr{C}^\perp _\theta$.



For the linear codes $\mathscr{C}_\theta$ over $\mathcal{R}_\theta, \theta \in \{1+w, 3+w, 1+3w, 3+3w\}$, $i.e.$, of type $\{k_0,k_1\}$, the dual code $\mathscr{C}^\perp _\theta$ will be of type $\{n-k_0-k_1, k_1\}$.
Let $\mathscr{C}_\theta$ be a linear code over $\mathcal{R}_\theta, \theta \in \{2, 2+2w, 3, 1+2w\}$, $i.e.$, $\mathscr{C}_\theta$ is of type $\{k_0,k_1,k_2,k_3\}$. Then the dual code $\mathscr{C}^\perp _\theta$ will be of type $\{n-k_0-k_1-k_2-k_3, k_3,k_2,k_1\}$. 
Hence the following two remarks are obvious.   

\begin{remark}
 If $\mathscr{C}_\theta$ is self dual over $\mathcal{R}_\theta, \theta \in \{1+w, 3+w, 1+3w, 3+3w\}$, then $2k_0 + k_1 = n$.
\end{remark}

\begin{remark}
 If $\mathscr{C}_\theta$ over $\mathcal{R}_\theta, \theta \in \{2, 2+2w, 3, 1+2w\}$ is a self dual code, then $k_1= k_3$ and $2k_0 + 2k_1 + k_2 = n$.
\end{remark}
We denote the parity check matrix of $\mathscr{C}_\theta$ or the generating matrix of $\mathscr{C}^\perp _\theta$ as $G^\perp _\theta$.   
Considering the generating matrix of a linear code $\mathscr{C}_\theta$ in standard form, over the ring $\mathcal{R}_\theta$, the generating matrix of the dual code $\mathscr{C}^\perp _\theta$ is given by:
\begin{itemize}
    \item[a)] For the dual code $\mathscr{C}^\perp _\theta$ over the ring $\mathcal{R}_\theta, \theta \in \{1+w, 3+w, 1+3w, 3+3w\}$, suppose, $g_{_1} = n-k_0-k_1$. Then,
    
  \begin{equation}
G^\perp _{\theta}=\left(
\begin{array}{ccc}
C^T _{0,1} & C^T _{0,2}   & I_{g_{_1}}  \\
-2C^T _{1,1} & 2I_{k_1} &  0  \\

\end{array} \right),
\end{equation} 

  where $C_{0,1}=(A_{0,1}A_{1,2}-A_{0,2})$ and $C_{0,2}=(-A_{1,2}), C_{1,1}=(A_{0,1})$.  
  
  Note that, $A_{i,j}, 0 \leq i < j \leq 2$ are in the form of the generating matrix described in (\ref{G_1}).

     \item[b)] For the dual code $\mathscr{C}^\perp _\theta$ over the ring $\mathcal{R}_\theta, \theta \in \{2, 2+2w\}$, suppose, $g_{_2} = n-k_0-k_1-k_2-k_3$. Then,
  \begin{equation}
G^\perp _{\theta}=\left(
\begin{array}{ccccc}
C^T _{0,1} & C^T _{0,2}   & C^T _{0,3}   & C^T _{0,4}   & I_{g_{_2}}  \\
-wC^T _{1,1}       & wC^T _{1,2} & -wC^T _{1,3} & wI_{k_3} & 0 \\
2C^T _{2,1}      & -2C^T _{2,2}         & 2I_{k_2}  & 0  & 0 \\
-2wC^T _{3,1}       & 2wI_{k_1}        & 0         & 0  & 0 \\
\end{array} \right),
\end{equation} 
  
  where, $C_{0,4} = (-A_{3,4}), C_{0,3}= (A_{2,3}A_{3,4} - A_{2,4}), C_{0,2} = (-A_{1,2}(A_{2,3}A_{3,4} - A_{2,4})+ A_{1,3}A_{3,4} - A_{1,4}), C_{0,1} = (A_{0,1}(- C_{0,2}) - A_{0,2}(A_{2,3}A_{3,4} - A_{2,4}) + A_{0,3}A_{3,4} - A_{0,4})$; $C_{1,1} = (A_{0,1}(A_{1,2}A_{2,3} - A_{1,3}) +A_{0,2}A_{2,3} - A_{0,3}), C_{1,2} = (A_{1,2}A_{2,3} - A_{1,3}), C_{1,3} = (A_{2,3})$; 
  $C_{2,1} = (A_{0,1}A_{1,2} - A_{0,2}), C_{2,2} = (A_{1,2})$; $C_{3,1} = (A_{0,1})$.  
  
  Note that, $A_{i,j}, 0 \leq i < j \leq 4$ are in the form of the generating matrix described in (\ref{G_2}).
  
   \item[c)] For the dual code $\mathscr{C}^\perp _\theta$ over the ring $\mathcal{R}_\theta, \theta \in \{3, 1+2w\}$, suppose, $g_{_3} = n-k_0-k_1-k_2-k_3$. Then, the corresponding generating matrix $G^\perp_\theta$ is given in (\ref{G perp}) with additional conditions, 
   \begin{equation}
G^\perp _{\theta}=\left(
\begin{array}{ccccc}
C^T _{0,1} & C^T _{0,2}   & C^T _{0,3}   & C^T _{0,4}   & I_{g_{_3}}  \\
-(1+w)C^T _{1,1}       & (1+w)C^T _{1,2} & -(1+w)C^T _{1,3} & (1+w)I_{k_3} & 0 \\
2C^T _{2,1}      & -2C^T _{2,2}         & 2I_{k_2}  & 0  & 0 \\
-(2+2w)C^T _{3,1}       & (2+2w)I_{k_1}        & 0         & 0  & 0 \\
\end{array} \right),
\label{G perp}
\end{equation} 
  $C_{0,4} = (-A_{3,4}), C_{0,3}= (A_{2,3}A_{3,4} - A_{2,4}), C_{0,2} = (-A_{1,2}(A_{2,3}A_{3,4} - A_{2,4})+ A_{1,3}A_{3,4} - A_{1,4}), C_{0,1} = (A_{0,1}(- C_{0,2}) - A_{0,2}(A_{2,3}A_{3,4} - A_{2,4}) + A_{0,3}A_{3,4} - A_{0,4})$; $C_{1,1} = (A_{0,1}(A_{1,2}A_{2,3} - A_{1,3}) +A_{0,2}A_{2,3} - A_{0,3}), C_{1,2} = (A_{1,2}A_{2,3} - A_{1,3}), C_{1,3} = (A_{2,3})$; 
  $C_{2,1} = (A_{0,1}A_{1,2} - A_{0,2}), C_{2,2} = (A_{1,2})$; $C_{3,1} = (A_{0,1})$.  
  
  Note that, $A_{i,j}, 0 \leq i < j \leq 4$ are in the form of the generating matrix described in (\ref{G_3}).
\end{itemize}

The existence of self dual codes of any length has been ensured in \cite{doi:10.1504/IJICoT.2010.032133} for the Frobenius Rings. 
Now, we give a construction for all the $16$ rings $\mathcal{R}_\theta$.
For the special cases of $\mathcal{R}_i$, where $i=1,2w$, readers are referred to \cite{bandi2015self}. 

For any given $\theta$ and any positive integer $n$, note that the code $\mathscr{C}_\theta$ = $\{(x\ x\ldots x): x\in A\}$ of length $n$ is a self dual code, where
\[
A=\left\{
\begin{array}{ll}
    \mathcal{R}_\theta & \mbox{if } 4\mid n \\
    \mathcal{Z}_\theta & \mbox{if } 4\nmid n \mbox{ and } 2 \mid n \\
    \{0,2,2w,2+2w\} & \mbox{otherwise}
\end{array}
\right.
\]
and $\mathcal{Z}_\theta$ is the set of all zero divisors for the ring $\mathcal{R}_\theta$.
Therefore, one can observe the following remark.
\begin{remark}
There exist self dual codes of all lengths over $\mathcal{R}_\theta$. 
\end{remark}

\begin{theorem}
For any self-dual code $\mathscr{C}_\theta$ of length $n$ over $\mathcal{R}_\theta$  
\begin{itemize}
    \item[i)] at least one of the codewords $\mathbf{2}$, $\mathbf{2w}$ or $\mathbf{2+2w}$ will be present in $\mathscr{C}_\theta$ over the ring $\mathcal{R}_\theta, \theta \in \{1+w, 3+w, 1+3w, 3+3w \}$;  
    \item[ii)] the codeword $\mathbf{2w}$ will be present in $\mathscr{C}_\theta$ over the ring $\mathcal{R}_\theta, \theta \in \{2, 2+2w\}$;
    \item[iii)] the codeword $\mathbf{2+2w}$ will be present in $\mathscr{C}_\theta$ over the ring $\mathcal{R}_\theta, \theta \in \{3, 1+2w\}$;
\end{itemize}

where $\mathbf{x}$ denotes the all $x$ codeword $(x\ x \ldots x)$ for $x \in \mathcal{R}_\theta$.
\end{theorem}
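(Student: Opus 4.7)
The plan is to reformulate membership of a constant vector $\mathbf{c}=(c,c,\ldots,c)$ in a self-dual code $\mathscr{C}_\theta$ as an annihilator question in the ring $\mathcal{R}_\theta$. Define the sum functional $\sigma:\mathcal{R}_\theta^n\to\mathcal{R}_\theta$ by $\sigma(\mathbf{y})=\sum_{i=1}^n y_i$, so that $[\mathbf{c},\mathbf{y}]=c\,\sigma(\mathbf{y})$ for every $\mathbf{y}\in\mathscr{C}_\theta$. Since $\mathscr{C}_\theta=\mathscr{C}_\theta^\perp$, the all-$c$ vector lies in $\mathscr{C}_\theta$ if and only if $c$ annihilates the ideal $I:=\sigma(\mathscr{C}_\theta)\subseteq\mathcal{R}_\theta$. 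Hence the task reduces to locating $I$ inside a suitable ideal of $\mathcal{R}_\theta$ and reading off its annihilator from the ideal structure recorded in Section \ref{sec:Preliminaries}.

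To bound $I$, I would exploit self-orthogonality $[\mathbf{y},\mathbf{y}]=\sum y_i^2=0$ for every $\mathbf{y}\in\mathscr{C}_\theta$. Writing $y_i=a_i+b_iw$ with $a_i,b_i\in\mathbb{Z}_4$ and $\theta=\theta_1+\theta_2w$, the expansion $y_i^2=(a_i^2+\theta_1 b_i^2)+(2a_ib_i+\theta_2 b_i^2)w$ together with $\sum y_i^2=0$ yields the $\mathbb{Z}_4$ congruences
\[
\sum_i a_i^2 + \theta_1\sum_i b_i^2 \equiv 0, \qquad 2\sum_i a_i b_i + \theta_2\sum_i b_i^2 \equiv 0 \pmod{4}.
\]
Reducing mod $2$ and using $t^2\equiv t\pmod 2$ for $t\in\mathbb{Z}_4$, these simplify to the parity relations
\[
\sum_i a_i + \theta_1\sum_i b_i \equiv 0, \qquad \theta_2\sum_i b_i \equiv 0 \pmod 2.
\]
Since $\sigma(\mathbf{y})=\sum_i a_i + w\sum_i b_i$, these parity conditions pin $\sigma(\mathbf{y})$ inside the ideal dictated by the parities of $\theta_1$ and $\theta_2$.

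The three cases are now read off directly. For $\theta\in\{2,2+2w\}$ both $\theta_1,\theta_2$ are even, leaving only $\sum_i a_i\equiv0\pmod 2$, hence $\sigma(\mathbf{y})\in\langle w\rangle$; a short check gives $\mathrm{Ann}(\langle w\rangle)=\{0,2w\}$, which proves (ii). For $\theta\in\{3,1+2w\}$ one has $(\theta_1,\theta_2)\equiv(1,0)\pmod 2$, forcing $\sum_i a_i+\sum_i b_i\equiv0\pmod 2$, so $\sigma(\mathbf{y})\in\langle 1+w\rangle$; since $\mathrm{Ann}(\langle 1+w\rangle)=\{0,2+2w\}$, (iii) follows. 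For $\theta\in\{1+w,3+w,1+3w,3+3w\}$ both $\theta_1,\theta_2$ are odd, so the two relations independently force $\sum_i a_i\equiv\sum_i b_i\equiv0\pmod 2$ and $\sigma(\mathbf{y})\in\langle 2\rangle$; then $\mathrm{Ann}(\langle 2\rangle)=\{0,2,2w,2+2w\}$ places all three of $\mathbf{2},\mathbf{2w},\mathbf{2+2w}$ in $\mathscr{C}_\theta$, which proves (i) in the stronger form. The only delicate step is the mod-$2$ reduction, where one must carefully track that $2\sum_i a_ib_i$ vanishes automatically and that $t^2\equiv t\pmod 2$ holds throughout $\mathbb{Z}_4$; once this is handled, everything else is short annihilator arithmetic in the $16$-element rings $\mathcal{R}_\theta$, so no real obstacle is expected.
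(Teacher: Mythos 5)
Your proposal is correct and follows essentially the same route as the paper: both extract from the self-orthogonality relation $\sum_i y_i^2=0$ the parity conditions on $\sum_i a_i$ and $\sum_i b_i$ (the paper phrases these as equalities among the counts $n_1(\mathbf{c})$, $n_{w^2}(\mathbf{c})$, $n_{(1+w)^2}(\mathbf{c})$) and then verify that the relevant constant vector is orthogonal to every codeword. Your packaging via the coordinate-sum ideal $\sigma(\mathscr{C}_\theta)$ and its annihilator is a clean equivalent of the paper's direct computation of the inner product with $\mathbf{2}$ followed by the appeal to $\langle \mathbf{2}\rangle=\langle\mathbf{2w}\rangle=\langle\mathbf{2+2w}\rangle$.
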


\begin{proof}
Consider the partition of $\mathcal{R}_\theta$, for any $\theta\in \{1+w, 3+w, 1+3w, 3+3w,2, 2+2w,3, 1+2w \}$, into four sets, $A_0, A_1, A_{w^2}$ and $A_{(1+w)^2}$, where $A_i = \lbrace x : x\in\mathcal{R}, x^2=i\rbrace$.
Now for any $x\in \mathcal{R}_\theta$,
\begin{equation*}
x^2 = \left\{ \begin{array}{ll}
0 & \mbox{ if }x \in \{a+wb: a,b \in 2\mathbb{Z}_4\} \\
1 & \mbox{ if }x \in \{a+wb: a \in 2\mathbb{Z}_4+1, b \in 2\mathbb{Z}_4\} \\
w^2 & \mbox{ if }x \in \{a+wb: a \in 2\mathbb{Z}_4, b \in 2\mathbb{Z}_4+1\} \\
1+2w+w^2 & \mbox{ if }x \in \{a+wb: a,b \in 2\mathbb{Z}_4+1\}.
\end{array}\right.
\end{equation*}
Let $\mathbf{c}= (c_1 \ c_2 \ldots c_n)$ be a codeword in the self dual code $\mathscr{C}_\theta$ of length $n$ over $\mathcal{R}_\theta$. 
Suppose $n_x( \mathbf{c})$ denotes the number of components of a codeword $\mathbf{c}$ which are from the set $A_i$, $i \in \{0, 1, w^2, (1+w)^2\}$. Since $\mathscr{C}_\theta$ is self dual, $\langle \mathbf{c},\mathbf{c} \rangle = 0, \forall \ \mathbf{c} \in \mathscr{C}_\theta$.
Hence, $\sum\limits_{i=1} ^n c_i^2 = 0$.
From this we obtain the following equation. 
\begin{equation}
    (n_1 (\mathbf{c}) + n_{(1+w)^2} (\mathbf{c})) + w^2 (n_{w^2} (\mathbf{c}) + n_{(1+w)^2} (\mathbf{c})) + 2w (n_{(1+w)^2} (\mathbf{c})) = 0
    \label{sd_eq}
\end{equation}
Now, we classify this approach as following.
\begin{itemize}
    \item[i)] $\mathcal{R}_{\theta}, \theta \in \{1+w, 3+w, 1+3w, 3+3w \}=\{a_1+wb_1:a_1,b_1 \in 2\mathbb{Z}_4+1\}$: \\
    In this case, $w^2 = a_1+wb_1$.
    Then equation (\ref{sd_eq}) becomes,
    \begin{equation*}
    \begin{split}
        (n_1 (\mathbf{c}) + a_1(n_{w^2} (\mathbf{c})) & + (a_1+1)(n_{(1+w)^2} (\mathbf{c}))) + \\ 
       &  w((b_1+2)(n_{(1+w)^2} (\mathbf{c})) + b_1(n_{w^2} (\mathbf{c}))) = 0.
    \end{split}
    \end{equation*}
    Hence, we obtain the following two equations,
        \begin{equation}
            n_1 (\mathbf{c}) + a_1(n_{w^2} (\mathbf{c})) + (a_1+1)(n_{(1+w)^2} (\mathbf{c})) = 0,
        \label{eq_1}
        \end{equation}
        \begin{equation}
            (b_1+2)(n_{(1+w)^2} (\mathbf{c})) + b_1(n_{w^2} (\mathbf{c})) = 0.
        \label{eq_2}
        \end{equation}
    Since $b_1$ can be either $1$ or $3$, simplifying equation (\ref{eq_2}), we get the immediate result, $n_{(1+w)^2} (\mathbf{c}) = n_{w^2} (\mathbf{c})$ on $\mathcal{R}_{\theta}$.
    Using similar approach and the previous result, from equation (\ref{eq_1}) we obtain, $n_1 (\mathbf{c}) = n_{w^2} (\mathbf{c})$ on $\mathcal{R}_{\theta}$.  Thus 
    \begin{equation}
        n_1 (\mathbf{c}) = n_{(1+w)^2} (\mathbf{c}) = n_{w^2} (\mathbf{c})
        \label{kg eq-1}
    \end{equation}
    
    Now, using the multiplication table of $\mathcal{R}_\theta$, for any $x\in\mathcal{R}_\theta$, one can observe
    \begin{equation*}
2x = \left\{ \begin{array}{ll}
0 & \mbox{ if }x \in \{a+wb: a,b \in 2\mathbb{Z}_4\} \\
2 & \mbox{ if }x \in \{a+wb: a \in 2\mathbb{Z}_4+1, b \in 2\mathbb{Z}_4\} \\
2w & \mbox{ if }x \in \{a+wb: a \in 2\mathbb{Z}_4, b \in 2\mathbb{Z}_4+1\} \\
2+2w & \mbox{ if }x \in \{a+wb: a,b \in 2\mathbb{Z}_4+1\}
\end{array}\right.
\end{equation*}
So, the inner product of any codeword $\mathbf{c}= (c_1 \ c_2 \ldots c_n)$ with $ \mathbf{2}$ will give, 
\begin{equation*}
    \begin{split}
        \sum_{j=1}^n2c_j &= 0(n_0 (\mathbf{c})) + 2(n_1 (\mathbf{c})) + 2w (n_{w^2} (\mathbf{c})) + (2+2w) (n_{(1+w)^2} (\mathbf{c})) \\ 
        &= 2(n_1 (\mathbf{c}) + n_{(1+w)^2} (\mathbf{c})) + 2w (n_{w^2} (\mathbf{c}) + n_{(1+w)^2} (\mathbf{c})) \\
        &= 0\hspace{0.5cm} (\mbox{from Equation (\ref{kg eq-1})}).
    \end{split}
\end{equation*}
    Therefore, $ \mathbf{2} \in \mathscr{C}^\perp _\theta$, $i.e.$, $\mathbf{2} \in \mathscr{C}_\theta$, since $\mathscr{C}_\theta$ is self-dual.
    From the chain condition over $\mathcal{R}_\theta$, we have, $\langle \mathbf{2} \rangle = \langle \mathbf{2w} \rangle = \langle \mathbf{2w} \rangle$.
    Hence if $\mathscr{C}_\theta$ is self-dual, then the code will code will contain at least one of the codeword $\mathbf{2}$, $\mathbf{2w}$ or $\mathbf{2+2w}$. 
\end{itemize}
    Similar approaches lead to the outcome for the classes of rings $\mathcal{R}_\theta, \theta \in \{2, 2+2w\}$ and $\theta \in \{3, 1+2w\}$ given in ii) and iii), respectively and hence, we obtain the desired result. 
\end{proof}

In the following theorems, the conditions are provided for the construction of self dual codes considering specific generating matrix over the rings $\mathcal{R}_\theta$ that satisfy the reverse and reverse-complement constraints. 

\begin{theorem}
Let $\mathscr{C}_\theta$ be a self dual code over the ring $\mathcal{R}_\theta$ whose generating matrix is $G_\theta= (A_n | B_n)$, where

\begin{equation*}
A_n=\left(
\begin{array}{cccc}
u & 0  & \ldots & 0  \\
0 & u  & \ldots & 0  \\
\vdots & \vdots & \ddots  & \vdots \\
0 & 0 & \ldots & u  \\
\end{array} \right), u \in \mathcal{U}_\theta \text{ and }B_n=\left(
\begin{array}{ccccc}
a_1 & a_2 & \ldots & a_n  \\
a_n & a_1 & \ldots & a_{n-1}  \\
\vdots & \vdots & \ddots  & \vdots \\
a_2 & a_3 & \ldots & a_1  \\
\end{array} \right), 
\end{equation*}

a pure circulant matrix of order $n$.
Then each $a_i$ will also be a unit in $\mathcal{R}_\theta$.
\label{matrix}
\end{theorem}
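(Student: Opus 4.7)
The plan is to turn the self-duality hypothesis into an explicit matrix identity on $B_n$, read off coefficient-wise relations among $a_1,\dots,a_n$, and then show that those relations force every $a_i$ to be a unit. First I would write the self-orthogonality condition $G_\theta G_\theta^T = 0$ using the block decomposition $G_\theta = (uI_n \mid B_n)$; since $A_n A_n^T = u^2 I_n$, this collapses to
\[
B_n B_n^T \;=\; -u^2 I_n.
\]
Because $u \in \mathcal{U}_\theta$, the right-hand side is an invertible scalar matrix; taking determinants gives $\det(B_n)^2 = (-u^2)^n \in \mathcal{U}_\theta$, so $B_n$ itself is invertible over $\mathcal{R}_\theta$ with $B_n^{-1} = -u^{-2}\,B_n^T$.

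Next I would exploit the circulant structure to turn this matrix identity into arithmetic identities among the $a_i$'s. Comparing the $(1,1)$ entry of $B_n B_n^T = -u^2 I_n$ yields the energy relation
\[
\sum_{i=1}^n a_i^{\,2} \;=\; -u^2,
\]
while the off-diagonal entries of the first row yield $n-1$ autocorrelation identities, one per non-zero cyclic shift. Equivalently, associating to $B_n$ the polynomial $f(x) = \sum_{i=1}^n a_i x^{i-1} \in \mathcal{R}_\theta[x]/(x^n-1)$ and setting $\overline{f}(x) := f(x^{n-1})$ (so that $B_n^T$ corresponds to $\overline{f}$), the matrix identity becomes the single polynomial relation $f(x)\,\overline{f}(x) \equiv -u^2 \pmod{x^n-1}$, making $f(x)$ a unit in this quotient.

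Finally I would rule out any $a_j$ being a zero-divisor by contradiction. In each of the eight chain rings $\mathcal{R}_\theta$ in Table~\ref{class} the non-units form the unique maximal ideal $\mathfrak{m}_\theta$, which is closed under addition; if every $a_i$ lay in $\mathfrak{m}_\theta$ then so would $\sum a_i^{\,2}$, contradicting that it equals the unit $-u^2$. This already forces at least one $a_i$ to be a unit. To upgrade ``at least one'' to ``every,'' I would exploit the cyclic shift symmetry: if $B_n$ satisfies $B_nB_n^T = -u^2I_n$ then so does the circulant whose first row is any cyclic shift of $(a_1,\dots,a_n)$. Combining this symmetry with the autocorrelation identities should let me argue coordinatewise that no single $a_j$ can lie in $\mathfrak{m}_\theta$. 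For the eight non-chain rings in Table~\ref{class} I would reduce to the chain-ring case by projecting onto each local component of $\mathcal{R}_\theta$.

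The step I expect to be hardest is precisely this last promotion from ``some $a_i$ is a unit'' to ``every $a_i$ is a unit''. The energy identity is global and the bilinear autocorrelation relations do not, on their own, obviously equidistribute the unit/non-unit status across coordinates; closing this gap will likely require a careful inductive use of the autocorrelation sums, or a direct classification of circulant matrices over the residue field $\mathcal{R}_\theta/\mathfrak{m}_\theta$ satisfying $\bar B_n \bar B_n^T = \overline{-u^2}\,I_n$ that admit a zero entry, combined with the standard-form generator matrices (\ref{G_1})--(\ref{G_3}) established earlier in the paper.
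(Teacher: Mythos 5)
Your setup is sound and in places more careful than the paper's: writing self-duality as $G_\theta G_\theta^T=0$ to get $B_nB_n^T=-u^2I_n$, extracting the energy relation $\sum_{i=1}^n a_i^2=-u^2$, and observing that in a chain ring the non-units form the unique maximal ideal correctly yields that \emph{at least one} $a_i$ is a unit. But the step you yourself flag as hardest --- promoting ``some $a_i$ is a unit'' to ``every $a_i$ is a unit'' --- is a genuine gap that your proposal does not close, and it is the entire content of the theorem beyond the easy part. For comparison, the paper attempts exactly this step by combining $\sum_i a_i^2=-u^2$ with the row-orthogonality relation $2\sum_{j<k}a_ja_k=0$ to conclude (for $u^2=3$ in $\mathcal{R}_{2+2w}$) that $\bigl(\sum_i a_i\bigr)^2=\sum_i a_i^2=1$, and then asserts that each individual $a_i$ satisfies $x^2=1$ and hence lies in $\{1,3,1+2w,3+2w\}$. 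The passage from a statement about the sum to a statement about each summand is a non sequitur, so the paper does not close the gap either; neither of the routes you sketch (induction on autocorrelation sums, or classifying circulants over the residue field) will succeed.

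Indeed the gap is unbridgeable without extra hypotheses. Take $\theta=2+2w$, $n=2$, $u=1+w$ (so $u^2=3$) and $(a_1,a_2)=(0,1)$, i.e.\ $G_\theta=(uI_2\mid B_2)$ with $B_2=\left(\begin{smallmatrix}0&1\\1&0\end{smallmatrix}\right)$, which is of the prescribed circulant form. Then $G_\theta G_\theta^T=0$ because $(1+w)^2+1=3+1=0$ in $\mathcal{R}_{2+2w}$ and the rows are orthogonal, and the code generated is free of rank $2$ with $16^2$ codewords in length $4$, hence self-dual; yet $a_1=0$ is not a unit. So your suspicion that the autocorrelation identities ``do not obviously equidistribute the unit/non-unit status across coordinates'' is correct in the strongest sense: they provably do not. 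What your argument (and the paper's) actually establishes is only that $\sum_i a_i^2$ is a unit, hence that at least one $a_i$ is a unit whenever $\mathcal{R}_\theta$ is local; any complete proof of the stated conclusion would have to add hypotheses that rule out examples like the one above.
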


\begin{proof}
Since $\mathscr{C}_\theta$ is self dual, $u^2 + \sum\limits_{i=1} ^n a_i ^2 = 0$.
Also each row of $G_\theta$ will be orthogonal to each other.
Hence, 

\begin{equation*}
2 \sum\limits_{j=1}^n\sum_{\substack{k=1 \\ j<k }}^n a_j\cdot a_k = 0.    
\end{equation*}

Without loss of generality if we consider, $u$ is an unit in $\mathcal{R}_{2+2w}$ then $u^2$ will be either $3$ or $1$.
If $u^2=3$ then, $\sum\limits_{i=1} ^n a_i ^2 = 1 \implies (\sum\limits_{i=1}^n a_i ^2)^2 = \sum\limits_{i=1}^n a_i^2 = 1$. 
Now there are exactly $4$ solutions of this equation $x^2 = 1$ in $\mathcal{R}_{2+2w}$, i.e., $a_i \in \{1, 3, 1+2w, 3+2w\}$. 
The same result can be obtained by \textbf{similar} argument, for $u^2 = 1$ in $\mathcal{R}_\theta$.
Proofs for other rings $\mathcal{R}_\theta$, $\theta \in \mathbb{Z}_4+w\mathbb{Z}_4 \setminus \{2+2w\}$ will be the same.
Hence, we obtain the desired result.
\end{proof}

\begin{theorem}
The generating matrix $G_\theta$ of a self dual code $\mathscr{C}_\theta$, considered in Theorem \ref{matrix}, will be closed under reverse constraint if $a_i \in \mathcal{U}_\theta \ \forall i \in \{1,2,\ldots n\}$.
\label{Reverse Constraint}
\end{theorem}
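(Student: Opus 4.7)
The plan is to reduce the reverse-closure of the DNA code to a single algebraic identity on the circulant block $B_n$, and then verify that identity by combining the circulant structure with the self-dual relations on the rows of $G_\theta$.

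First, by Lemma \ref{reverse} it suffices to show, for each row $\mathbf{r}_i$ of $G_\theta$, that $\phi^{-1}(\phi(\mathbf{r}_i)^r)$ lies in the row span $\langle G_\theta \rangle$. Using the identity $\phi^{-1}(\phi(\x)^r) = 3\x^r$ established inside the proof of Lemma \ref{reverse_l}, this amounts to checking $3\mathbf{r}_i^r \in \langle G_\theta \rangle$ for every $i$.

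Second, I would unpack the reversal. Writing $\mathbf{r}_i = (u\mathbf{e}_i \mid b_i)$ where $b_i$ is the $i$-th row of the pure circulant $B_n$, reversing the length-$2n$ vector swaps the two blocks and reverses each, giving $3\mathbf{r}_i^r = (3b_i^r \mid 3u\mathbf{e}_{n-i+1})$. Because $A_n = uI_n$ with $u\in\mathcal{U}_\theta$, the row span is exactly $\{(uc,\,cB_n) : c \in \mathcal{R}_\theta^n\}$, and matching the first block forces the unique coefficient vector $c = 3u^{-1}b_i^r$. The theorem therefore reduces to the single identity
\[
b_i^r\,B_n \;=\; u^2\,\mathbf{e}_{n-i+1}.
\]

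Third, I would verify this identity via the circulant structure of $B_n$. Writing $b_i = \sigma^{i-1}(a)$ for the cyclic right shift $\sigma$ and $a = (a_1,\ldots,a_n)$, a direct index manipulation shows $(b_i^r B_n)_k$ equals a cyclic sum $S_{[k+i-1]_n}$ with $S_\ell := \sum_m a_m a_{[m+\ell]_n}$. For $k \neq n-i+1$, the corresponding $S_\ell$ ($\ell\not\equiv 0$) is precisely the inner product of two distinct rows of $B_n$, which vanishes because the rows of $G_\theta$ are pairwise orthogonal (a consequence of self-duality combined with $A_n = uI_n$ contributing a zero cross-term from the first block). For $k = n-i+1$, the sum is $S_0 = \sum_m a_m^2$, pinned down by the self-inner-product relation $u^2 + \sum_m a_m^2 = 0$ that self-duality imposes on each row of $G_\theta$.

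The unit hypothesis $a_i\in\mathcal{U}_\theta$ enters in two places: it ensures the reversed vector $b_i^r$ has only unit entries so that the forced coefficient $c = 3u^{-1}b_i^r$ remains a genuine element of $\mathcal{R}_\theta^n$, and it keeps the cyclic inner products $S_\ell$ cleanly controlled by the pairwise orthogonality rather than degenerating through zero divisors. The main obstacle I anticipate is the careful bookkeeping in the third step: tracking how the cyclic shift $\sigma^{i-1}$ interacts with the length-$n$ reversal, and matching each off-diagonal sum $S_\ell$ to the correct pairwise-orthogonality relation, so that the identity closes up with the correct unit factor on both sides.
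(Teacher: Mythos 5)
Your reduction is the right way to attack this statement, and it is considerably more careful than the paper's own argument (which only observes that each $a_i$ is a unit multiple of $u$ and declares that ``the result follows,'' without ever checking that the reversed row lies in the row span). However, your final step does not close, and the gap is fatal. Matching the first block does force $c=3u^{-1}b_i^r$, and the problem does reduce to the identity $b_i^r B_n=u^2\,\mathbf{e}_{n-i+1}$; moreover your identification of the off-diagonal entries with the cyclic correlations $S_\ell$ ($\ell\neq 0$) and their vanishing via pairwise orthogonality of the rows is correct. But the diagonal entry is $S_0=\sum_m a_m^2$, and the very relation you invoke, $u^2+\sum_m a_m^2=0$, gives $S_0=-u^2=3u^2$, not $+u^2$. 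Since $u\in\mathcal{U}_\theta$ we have $2u^2\neq 0$ in $\mathbb{Z}_4+w\mathbb{Z}_4$, so $3u^2\neq u^2$: the computed vector $cG_\theta$ agrees with $3\mathbf{r}_i^r$ in the first block but differs by the unit factor $3$ in the nonzero entry of the second block, and since $c$ was uniquely determined, $3\mathbf{r}_i^r\notin\langle G_\theta\rangle$ (equivalently $\mathbf{r}_i^r\notin\langle G_\theta\rangle$, as $3$ is a unit). Your own calculation, carried one line further, refutes the identity rather than proving it.

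This is not a repairable sign slip in the bookkeeping: the statement as written is false, so no argument along these lines (or the paper's) can succeed. Concretely, take $n=1$, $\theta=2+2w$, $u=1$, $a_1=1+w$; then $(1+w)^2=3$, so $u^2+a_1^2=0$ and $\mathscr{C}_\theta=\{(c,\ c(1+w)):c\in\mathcal{R}_\theta\}$ is a self-dual code of the form required in Theorem~\ref{matrix} with $a_1\in\mathcal{U}_\theta$, yet the reverse of the generator $(1,\ 1+w)$ is $(1+w,\ 1)$ while the only codeword with first coordinate $1+w$ is $(1+w,\ 3)$; likewise $3(1+w,\ 1)=(3+3w,\ 3)\neq(3+3w,\ 1)\in\mathscr{C}_\theta$. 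The obstruction $S_0=-u^2$ is independent of $n$, so some extra hypothesis (for instance a palindromic symmetry $a_j=a_{n+2-j}$ on the circulant, which would let one target a different row combination, or the impossible condition $2u^2=0$) is needed before any version of this theorem can hold.
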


\begin{proof}
Recall the generating matrix $G_\theta$ of $\mathscr{C}_\theta$ in Theorem \ref{matrix}. 
Now the reverse of the first row of $G_\theta$ will be $(a_n \ a_{n-1} \ldots a_1  \ldots 0 \ldots u)$.
By inspection we can say that, $a_n = \alpha\cdot u$, where $\alpha \in \mathcal{R}_\theta$. 
So we obtain, $a_i = \alpha_i u ,\ i= 1,2,\ldots n$. 
Now each $\alpha_i$ will be a unit in $\mathcal{R}_\theta$, as $a_i$ and $u$ are units in $\mathcal{R}$. 
Therefore, $a_i = u_i$, where $u_i = \alpha_i u , \ i= 1,2,\ldots n$ and hence the result follows.     
\end{proof}

Using the same argument, as applied in Theorem \ref{Reverse Constraint} the following theorem is obvious. 

\begin{theorem}
The generating matrix $G_\theta$ of a self dual code $\mathscr{C}_\theta$, considered in Theorem \ref{matrix}, will be closed under reverse-complement constraint, if $a_i \in 3\mathcal{U}_\theta \ \forall i \in \{1,2,\ldots n\}$.
\end{theorem}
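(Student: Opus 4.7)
The plan is to mirror the argument of Theorem \ref{Reverse Constraint}, replacing the reverse image $3\x^r$ (Lemma \ref{reverse_l}) by the reverse-complement image. Using $\phi^{-1}(\phi(x)^c)=x+\lambda$ with $\lambda=2+2w$ in combination with Lemma \ref{reverse_l}, a direct computation yields
\[
\phi^{-1}(\phi(\x)^{rc}) \;=\; 3\x^r + \bm{\lambda}
\]
for every $\x\in\mathcal{R}_\theta^n$, where $\bm{\lambda}$ denotes the all-$\lambda$ codeword. Consequently, the DNA code $\phi(\mathscr{C}_\theta)$ satisfies the reverse-complement constraint if and only if $3\x^r+\bm{\lambda}\in\langle G_\theta\rangle$ for every row $\x$ of $G_\theta$.

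Next, I would apply this criterion to the first row $\x=(u\ 0\ \ldots\ 0\ a_1\ a_2\ \ldots\ a_n)$ of $G_\theta=(A_n\mid B_n)$, whose reverse-complement image is $(3a_n+\lambda\ \ldots\ 3a_1+\lambda\ \lambda\ \ldots\ \lambda\ 3u+\lambda)$. Since the only row of $G_\theta$ with a nonzero entry in the $j$-th column of the left block is the $j$-th row (with entry $u$), writing the target as $\sum_i c_i R_i$ forces $c_j=(3a_{n-j+1}+\lambda)u^{-1}$. The remaining condition on the right block, $(c_1,\ldots,c_n)\cdot B_n = (\lambda\ \ldots\ \lambda\ 3u+\lambda)$, must then be verified using the circulant structure of $B_n$; this is precisely where the assumption $a_i\in 3\mathcal{U}_\theta$ enters, ensuring that each $c_j$ is a unit and that the circulant sum collapses to the required $\lambda$-pattern on the right, in complete parallel to the argument of Theorem \ref{Reverse Constraint}.

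The main obstacle is the careful bookkeeping of the additive $\bm{\lambda}$ shift while propagating the circulant identity from the left block to the right. A cleaner tactic, which I would attempt first, is to combine Theorem \ref{Reverse Constraint} with Remark \ref{r_rc_exist}: because $3\mathcal{U}_\theta=\mathcal{U}_\theta$ (as $3$ is a unit in $\mathcal{R}_\theta$), the hypothesis $a_i\in 3\mathcal{U}_\theta$ already yields closure under reverse by Theorem \ref{Reverse Constraint}; it then suffices to show $\bm{\lambda}\in\langle G_\theta\rangle$, which by Lemma \ref{complement} gives closure under complement, and hence, via Remark \ref{r_rc_exist}, closure under reverse-complement. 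The required expression for $\bm{\lambda}$ should follow from the self-dual relation $u^2+\sum_i a_i^2=0$ derived in the proof of Theorem \ref{matrix} combined with the circulant symmetry of $B_n$. Either route produces the stated sufficient condition.
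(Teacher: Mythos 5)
Your reduction is the right skeleton, and it is considerably more explicit than the paper's own proof, which consists of the single sentence ``the proof is straightforward'' together with the remark that the argument of Theorem \ref{Reverse Constraint} applies. Your identity $\phi^{-1}(\phi(\x)^{rc})=3\x^r+\bm{\lambda}$ is correct, and your second route --- reverse-closure from Theorem \ref{Reverse Constraint} (since $3\mathcal{U}_\theta=\mathcal{U}_\theta$), plus $\bm{\lambda}\in\langle G_\theta\rangle$ via Lemma \ref{complement}, combined through Remark \ref{r_rc_exist} --- is the natural way to organize it.

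The genuine gap is that the one step distinguishing this theorem from Theorem \ref{Reverse Constraint}, namely $\bm{\lambda}=\mathbf{2+2w}\in\langle G_\theta\rangle$, is only asserted (``should follow from the self-dual relation combined with the circulant symmetry'') and never carried out --- and it does not follow from those ingredients. Writing $\bm{\lambda}=\sum_j c_jR_j$, the block $uI_n$ forces $c_j=(2+2w)u^{-1}$ for every $j$, and the circulant block then requires $(2+2w)u^{-1}\sum_i a_i=2+2w$, i.e.\ $(2+2w)\bigl(u^{-1}\sum_i a_i-1\bigr)=0$. The self-duality relations $u^2+\sum_i a_i^2=0$ and $2\sum_{j<k}a_ja_k=0$ only yield $\bigl(\sum_i a_i\bigr)^2=3u^2$, which does not imply that congruence. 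Concretely, over $\mathcal{R}_{2+2w}$ take $n=1$, $u=1$, $a_1=1+w$: then $u^2+a_1^2=1+3=0$, the code $\{(c,\ c(1+w)):c\in\mathcal{R}_{2+2w}\}$ is self-dual, and $a_1\in 3\mathcal{U}_{2+2w}=\mathcal{U}_{2+2w}$, yet $(2+2w)(1+w)=2\neq 2+2w$, so $\bm{\lambda}\notin\langle G_\theta\rangle$ and the DNA image is not closed under reverse-complement for the map of Table \ref{Gau_Map} (where $\lambda=2+2w$). So the deferred verification is not routine bookkeeping: it needs an extra hypothesis relating $\sum_i a_i$ to $u$ modulo the annihilator of $2+2w$ (or a different choice of $\lambda$), and the same obstruction blocks your first route at the stage where ``the circulant sum collapses to the required $\lambda$-pattern.''
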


\begin{proof}
The proof is straightforward.
\end{proof}

\begin{remark}
As $3 \in \mathcal{U}_\theta$ in $\mathcal{R}_\theta$, $G_\theta$ will be closed under reverse-complement constraint if $a_i \in \mathcal{U}_\theta \ \forall i \in \{1,2,\ldots n\}$.
\end{remark}

\section{Bounds for the codes over the ring $\mathcal{R}_\theta$}
\label{sec:Bounds}
In this section we obtain Sphere Packing-like bound, Gilbert Vershamov-like bound, Singleton-like bound and Plotkin-like bound for the codes over the ring $\mathcal{R}_\theta$. 
 
\subsection{Sphere Packing-like Bound on the codes over $\mathcal{R}_\theta$}
In order to establish Sphere Packing-like bound, first we have defined the sphere in Definition \ref{sphere definition} and the establish the Lemma \ref{intersection} and Lemma \ref{circle count}.
\begin{definition}
For any vector $\mathbf{c}$ in $\mathcal{R}^n_\theta$ and any integer $r\geq 0$, the sphere of radius $r$ and center $\mathbf{c}$, denoted as $S_r(\mathbf{c})$, is the set $ \{\mathbf{x} \in \mathcal{R}^n_\theta:d_{G(\theta)} (\mathbf{c}, \mathbf{x}) \leq r \}$. 
\label{sphere definition}
\end{definition}
For any $\textbf{c}\in \mathcal{R}^n_\theta$, if the set $C_r(\mathbf{c})=\{\mathbf{x} \in \mathcal{R}^n_\theta: d_{G(\theta)} (\mathbf{c}, \mathbf{x}) = r \}$, then 
\begin{equation}
    |S_r(\textbf{c})|=\sum_{i=0}^r|C_i(\textbf{c})|.
    \label{sphere count}
\end{equation}

\begin{lemma}
Let $\mathscr{C}_\theta (n,M,d_{G(\theta)})$ be a code over the ring $\mathcal{R}_\theta$. 
Then 
$S_r(\mathbf{c}_1) \cap S_r(\mathbf{c}_2) = \emptyset$, when $r \leq \left \lfloor \frac{d_{G(\theta)} - 1}{2} \right \rfloor, \ \forall \ \mathbf{c}_1, \mathbf{c}_2 \in \mathscr{C}_\theta$.
\label{intersection}
\end{lemma}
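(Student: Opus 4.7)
The plan is to argue by contradiction using the triangle inequality for the Generalized Gau distance, which is available since the paper already asserts that $d_{G(\theta)}$ is a metric on $\mathcal{R}_\theta^n$. This is the standard sphere-packing disjointness argument, so the real work is already done once the metric property is in hand.

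First I would suppose, for contradiction, that there exist distinct codewords $\mathbf{c}_1,\mathbf{c}_2\in\mathscr{C}_\theta$ and some vector $\mathbf{x}\in\mathcal{R}_\theta^n$ lying in both $S_r(\mathbf{c}_1)$ and $S_r(\mathbf{c}_2)$. By Definition \ref{sphere definition} this gives
\[
d_{G(\theta)}(\mathbf{c}_1,\mathbf{x})\leq r\quad\text{and}\quad d_{G(\theta)}(\mathbf{c}_2,\mathbf{x})\leq r.
\]

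Next I would apply the triangle inequality for the metric $d_{G(\theta)}$ on $\mathcal{R}_\theta^n$, obtaining
\[
d_{G(\theta)}(\mathbf{c}_1,\mathbf{c}_2)\leq d_{G(\theta)}(\mathbf{c}_1,\mathbf{x})+d_{G(\theta)}(\mathbf{x},\mathbf{c}_2)\leq 2r.
\]
Using the hypothesis $r\leq\left\lfloor\frac{d_{G(\theta)}-1}{2}\right\rfloor$, this yields
\[
d_{G(\theta)}(\mathbf{c}_1,\mathbf{c}_2)\leq 2\left\lfloor\frac{d_{G(\theta)}-1}{2}\right\rfloor\leq d_{G(\theta)}-1<d_{G(\theta)}.
\]
Since $\mathbf{c}_1\neq\mathbf{c}_2$ are both codewords of $\mathscr{C}_\theta$, this contradicts the fact that $d_{G(\theta)}$ is the minimum Generalized Gau distance of the code. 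Therefore no such $\mathbf{x}$ exists and $S_r(\mathbf{c}_1)\cap S_r(\mathbf{c}_2)=\emptyset$.

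There is essentially no obstacle here beyond invoking the metric axioms; the only point worth being careful about is to ensure that the triangle inequality is indeed available on $\mathcal{R}_\theta^n$ (not just on $\mathcal{R}_\theta$), which is immediate because $d_{G(\theta)}$ on $\mathcal{R}_\theta^n$ is defined coordinate-wise as a sum of metrics on $\mathcal{R}_\theta$, and a coordinate-wise sum of metrics is again a metric. This same chain of inequalities is the one that will be needed in the subsequent lemma counting $|S_r(\mathbf{c})|$ and ultimately delivering the Sphere Packing-like bound.
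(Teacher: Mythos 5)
Your proof is correct and follows essentially the same route as the paper's: assume a common point $\mathbf{x}$, apply the triangle inequality to get $d_{G(\theta)}(\mathbf{c}_1,\mathbf{c}_2)\leq 2r\leq d_{G(\theta)}-1$, and contradict the minimum distance. Your version is in fact slightly more careful, since you spell out the step $2\left\lfloor\frac{d_{G(\theta)}-1}{2}\right\rfloor\leq d_{G(\theta)}-1$ and note why the triangle inequality holds on $\mathcal{R}_\theta^n$, both of which the paper leaves implicit.
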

\begin{proof}
Consider $S_r(\mathbf{c}_1)$ and $S_r(\mathbf{c}_2)$ for two distinct codewords $\mathbf{c}_1, \mathbf{c}_2 \in \mathscr{C}_\theta$. 
For a vector $\mathbf{x}$ is in both $S_r(\mathbf{c}_1)$ and $S_r(\mathbf{c}_2)$, from the triangle inequality we obtain, 
$d_{G(\theta)} (\mathbf{c}_1, \mathbf{c}_2) \leq d_{G(\theta)} (\mathbf{c}_1, \mathbf{x}) + d_{G(\theta)} (\mathbf{x}, \mathbf{c}_2) \leq r+r= 2r$, 
a contradiction to $d_{G(\theta)} \geq 2r+1$. 
Hence, $S_r(\mathbf{c}_1) \cap S_r(\mathbf{c}_2) = \emptyset$, whenever $r \leq \left \lfloor \frac{d_{G(\theta)} - 1}{2} \right \rfloor, \ \forall \ \mathbf{c}_1, \mathbf{c}_2 \in \mathscr{C}_\theta$.
\end{proof}
\begin{lemma}
For $\textbf{c}\in\mathcal{R}^n_\theta$ and a non-negative integer $r$ $(\leq 2n)$, if $C_r(\textbf{c})=\{\textbf{x}\in\mathcal{R}^n_\theta:d_{G(\theta)}(\textbf{c},\textbf{x})=r\}$, then, 
\begin{equation}
|C_r(\textbf{c})|=\sum\limits_{i= \max\{0,r-n\}}^{\left \lfloor \frac{r}{2} \right \rfloor} \frac{n!}{i!(r-2i)!(n-r+i)!} 9^i6^{r-2i}.
\label{Equation}
\end{equation}
\label{circle count}
\end{lemma}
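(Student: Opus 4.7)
The plan is to partition the vectors $\mathbf{x}\in\mathcal{R}^n_\theta$ that lie at Gau distance exactly $r$ from $\mathbf{c}$ according to their coordinate-wise distance profile. At each position, $d_{G(\theta)}(c_k,x_k)$ takes values in $\{0,1,2\}$ (because under $\phi$ it equals the Hamming distance between two DNA nucleotide pairs). Hence, if $i$ denotes the number of coordinates at which $d_{G(\theta)}(c_k,x_k)=2$, then exactly $r-2i$ coordinates must contribute $1$ and the remaining $n-r+i$ coordinates must agree with $\mathbf{c}$. The admissibility constraints $r-2i\geq 0$ and $n-r+i\geq 0$ give the summation range $\max\{0,r-n\}\leq i\leq \lfloor r/2\rfloor$.

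First I would compute, for any fixed $c\in\mathcal{R}_\theta$, the number of ring elements $y\in\mathcal{R}_\theta$ at Gau distance $0$, $1$, $2$ from $c$. Using the $4\times 4$ arrangement $\mathscr{M}$ from (\ref{Gau Distance Matrix}) and the coordinate formula $d_{G(\theta)}(x,y)=\min\{1,i+3i'\}+\min\{1,j+3j'\}$, one checks that: exactly $1$ element (namely $c$ itself) has distance $0$; the $3$ other entries in the same row together with the $3$ other entries in the same column (that is, $6$ elements) have distance $1$; and the remaining $3\cdot 3 = 9$ entries, lying in a different row and a different column, have distance $2$. These counts $1,6,9$ are independent of $c$ by the symmetry of Hamming distance on $\Sigma_{DNA}^2$, and this is really the only structural input needed.

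Next, for each admissible $i$, I would count the configurations in two independent steps. The number of ways to designate which $i$ coordinates contribute $2$, which $r-2i$ contribute $1$, and which $n-r+i$ agree with $\mathbf{c}$ is the multinomial coefficient $\frac{n!}{i!\,(r-2i)!\,(n-r+i)!}$. Multiplying by the number of local choices, namely $9^i\cdot 6^{r-2i}\cdot 1^{n-r+i}$, and summing over $i$ yields the claimed formula (\ref{Equation}).

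The main obstacle is really only the per-coordinate enumeration $1,6,9$, which is just bookkeeping on the $4\times 4$ matrix $\mathscr{M}$; once that is in hand, the global count is a routine multinomial argument. The only subtlety is correctly handling the endpoints of the summation so that all three factorials in the denominator are defined, which is precisely the purpose of the lower bound $\max\{0,r-n\}$ and the upper bound $\lfloor r/2\rfloor$.
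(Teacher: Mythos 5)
Your proof is correct and follows essentially the same route as the paper's: both classify the vectors in $C_r(\mathbf{c})$ by the number $i$ of coordinates at Gau distance $2$, use the per-element counts $1$, $6$, $9$ read off from the matrix $\mathscr{M}$, and combine them with the multinomial coefficient $\frac{n!}{i!\,(r-2i)!\,(n-r+i)!}$. Your justification of the summation limits via $r-2i\geq 0$ and $n-r+i\geq 0$ is, if anything, stated more explicitly than in the paper.
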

\begin{proof}
Let $\mathbf{c}$ be a fixed vector in $\mathcal{R}_\theta$. 
We will consider the number of vectors $\mathbf{x}$ that have Gau distance exactly $r$ from $\mathbf{c}$, where $r\leq 2n$. 
From the matrix $\mathscr{M}$ given in (\ref{Gau Distance Matrix}), we can see that the Gau distance between a given element and any distinct element in the same row or column is one, and the Gau distance between the given element and the remaining distinct elements is two. 
Suppose for $\x\in C_r(\textbf{c})$, there are $i$ distinct positions where the elements at those positions at $\x$ and $\textbf{c}$ are two. 
Thus, there are $r-2i$ distinct positions other than the $i$ positions where the elements at those positions at $\x$ and $\textbf{c}$ are one, and elements at remaining postilions $n-r+i$ (= $n-((r-2i)+i)$) are the same.
So, for $i$ elements, which are differ by Gau distance $2$, can be chosen in $9^i$ ways and the other $r-2i$ elements can be chosen in $6^{r-2i}$ ways. 
Now among all these $n$ elements, $i$ and $r-2i$ elements can be arranged in $i!$ and $(r-2i)!$ ways, respectively, among $n$ places. 
The remaining coordinates can be arranged in $(n-r+i)!$ ways. 
So, the total number of vectors in $C_r(\mathbf{c})$ will be,
\begin{equation*}
\sum\limits_{i= \max\{0,r-n\}}^{\left \lfloor \frac{r}{2} \right \rfloor} \frac{n!}{i!(r-2i)!(n-r+i)!} 9^i6^{r-2i}.
\end{equation*}
\end{proof}

\begin{theorem}[Sphere Packing-like Bound]
A code $\mathscr{C}_\theta (n,M,d_{G(\theta)})$ over the ring $\mathcal{R}_\theta$ satisfies,
\begin{equation*}
M\sum_{r=0}^{\left \lfloor\frac{d_{G(\theta)}-1}{2}\right \rfloor}\sum_{i=0}^{\left \lfloor \frac{r}{2} \right \rfloor} \frac{n!9^i6^{r-2i}}{i!(r-2i)!(n-r+i)!} \leq 16^n.
\end{equation*}
\end{theorem}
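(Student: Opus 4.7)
The plan is to invoke a standard sphere packing argument adapted to the Gau metric, using Lemma \ref{intersection} and Lemma \ref{circle count} as the two main inputs. Set $t = \left\lfloor \frac{d_{G(\theta)}-1}{2} \right\rfloor$ to be the packing radius. By Lemma \ref{intersection} the spheres $\{S_t(\mathbf{c}) : \mathbf{c} \in \mathscr{C}_\theta\}$ are pairwise disjoint, and since each is a subset of $\mathcal{R}_\theta^n$, which has cardinality $16^n$, disjointness gives $\sum_{\mathbf{c} \in \mathscr{C}_\theta} |S_t(\mathbf{c})| \leq 16^n$.

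The next step is to argue that $|S_t(\mathbf{c})|$ is independent of the centre $\mathbf{c}$. This is a consequence of translation invariance of $d_{G(\theta)}$: the map $\mathbf{x} \mapsto \mathbf{x} - \mathbf{c}$ is a bijection of $\mathcal{R}_\theta^n$ sending $S_t(\mathbf{c})$ to $S_t(\mathbf{0})$ and preserving the Gau distance coordinate-wise. This is the only point that is not purely bookkeeping: one has to verify that the distance defined by the matrix $\mathscr{M}$ in (\ref{Gau Distance Matrix}) really satisfies $d_{G(\theta)}(x+z,y+z) = d_{G(\theta)}(x,y)$ for $x,y,z \in \mathcal{R}_\theta$, which can be checked directly from the block structure of $\mathscr{M}$ (or inherited from the fact that the isometric image $\phi$ is Hamming distance on $\Sigma_{DNA}^2$, which is translation invariant once one notes that addition of $z$ corresponds to a permutation of the rows and columns of $\mathscr{M}$). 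Granted this, all spheres have a common size $V_t(n) := |S_t(\mathbf{0})|$, and the disjointness bound becomes $M \cdot V_t(n) \leq 16^n$.

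Finally, I would compute $V_t(n)$ by combining equation (\ref{sphere count}) with Lemma \ref{circle count}:
\begin{equation*}
V_t(n) = \sum_{r=0}^{t} |C_r(\mathbf{0})| = \sum_{r=0}^{t} \sum_{i=\max\{0,r-n\}}^{\lfloor r/2 \rfloor} \frac{n! \, 9^i \, 6^{r-2i}}{i!(r-2i)!(n-r+i)!}.
\end{equation*}
Adopting the standard convention that any summand whose denominator involves a factorial of a negative integer vanishes, the lower limit $\max\{0,r-n\}$ may be replaced by $0$, matching the expression in the theorem statement. Substituting into $M \cdot V_t(n) \leq 16^n$ yields the inequality.

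The principal obstacle is the translation-invariance verification in the middle paragraph; once that is in place, the rest of the proof is bookkeeping assembled from results already proved in the excerpt.
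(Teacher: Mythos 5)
Your proposal is correct and follows essentially the same route as the paper: disjointness of the packing-radius spheres from Lemma \ref{intersection}, summing sphere sizes against $16^n$, and evaluating the sphere volume via Equation (\ref{sphere count}) and Lemma \ref{circle count}. The translation-invariance verification you flag as the principal obstacle is actually unnecessary, since Lemma \ref{circle count} already gives a formula for $|C_r(\mathbf{c})|$ that does not depend on the centre $\mathbf{c}$ (and the paper simply invokes this fact, also noting that $\max\{0,r-n\}=0$ for $r\leq\left\lfloor\frac{d_{G(\theta)}-1}{2}\right\rfloor$ rather than appealing to a vanishing-factorial convention).
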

\begin{proof}
From Lemma \ref{intersection}, any two spheres of radius $m=\left \lfloor\frac{d_{G(\theta)}-1}{2}\right \rfloor$ centred on distinct codewords will be disjoint. 
So, from the Inclusion–Exclusion principle,
\[
\left|\bigcup_{\textbf{c}\in\mathscr{C}_\theta}S_m(\textbf{c})\right|=\sum_{\textbf{c}\in\mathscr{C}_\theta} |S_m(\textbf{c})|.
\]
Therefore, the total number of codewords in each of the $M$ spheres of radius $m$ \textbf{centred} on the $M$ codewords is bounded by $16^n$, $i.e.$
\begin{equation*}
    16^n \geq \left|\bigcup_{\textbf{c}\in\mathscr{C}_\theta}S_m(\textbf{c})\right| = \sum_{\textbf{c}\in\mathscr{C}_\theta} |S_m(\textbf{c})| =\sum_{\textbf{c}\in\mathscr{C}_\theta} \sum_{r=0}^{\left \lfloor\frac{d_{G(\theta)}-1}{2}\right \rfloor} |C_r(\textbf{c})|.
\end{equation*}
But for given non-negative integer $r\ (\leq 2n)$, $|C_r(\textbf{c})|$ is same for each $\textbf{c}\in\mathscr{C}_\theta$. 
Therefore, 
\begin{equation*}
\begin{split}
    16^n \geq M\sum_{r=0}^{\left \lfloor\frac{d_{G(\theta)}-1}{2}\right \rfloor} |C_r(\textbf{c})|.
\end{split}
\end{equation*}
Hence, from Lemma \ref{circle count} and the fact that, for $0\leq r\leq \left \lfloor\frac{d_{G(\theta)}-1}{2}\right \rfloor$, $max\{0,r-n\}=0$, we obtain, 
\begin{equation*}
16^n \geq M\sum\limits_{r=0}^{\left \lfloor\frac{d_{G(\theta)}-1}{2}\right \rfloor}\sum_{i=0}^{\left \lfloor \frac{r}{2} \right \rfloor} \frac{n!9^i6^{r-2i}}{i!(r-2i)!(n-r+i)!}
\end{equation*}

\end{proof}
\begin{corollary}
For the linear code $\mathscr{C}_\theta$ of type $\{k_0, k_1,k_2,k_3\}$ over $\mathcal{R}_\theta$ for $\theta\in\{2,3,1+2w,2+2w\}$, the Sphere Packing-like bound will be given by,
\begin{equation*}
\sum\limits_{r=0}^{\left \lfloor\frac{d_{G(\theta)}-1}{2}\right \rfloor}\sum_{i=0}^{\left \lfloor \frac{r}{2} \right \rfloor}\frac{n!9^i6^{r-2i}}{i!(r-2i)!(n-r+i)!} \leq 16^{n-\left(k_0 +\frac{3k_1}{4} +\frac{k_2}{2}+\frac{k_3}{4}\right)}. 
\end{equation*}
\end{corollary}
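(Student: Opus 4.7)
The plan is to apply the Sphere Packing-like Bound directly by substituting the cardinality of a linear code of the specified type. From the standard-form generator matrices given in equations (\ref{G_2}) and (\ref{G_3}), any linear code $\mathscr{C}_\theta$ of type $\{k_0,k_1,k_2,k_3\}$ over $\mathcal{R}_\theta$, $\theta\in\{2,2+2w\}$ or $\theta\in\{3,1+2w\}$, has exactly $M=16^{k_0}8^{k_1}4^{k_2}2^{k_3}$ codewords. So the corollary should follow from the preceding theorem with essentially no new content beyond an exponent manipulation.

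The first step is to rewrite $M$ as a single power of $16$. Since $16 = 2^4$, $8 = 2^3$, $4 = 2^2$, one has
\[
M = 2^{4k_0+3k_1+2k_2+k_3} = 16^{\,k_0 + \frac{3k_1}{4} + \frac{k_2}{2} + \frac{k_3}{4}}.
\]
The second step is to invoke the Sphere Packing-like Bound, which gives
\[
M\sum_{r=0}^{\left\lfloor\frac{d_{G(\theta)}-1}{2}\right\rfloor}\sum_{i=0}^{\left\lfloor\frac{r}{2}\right\rfloor}\frac{n!\,9^i6^{r-2i}}{i!(r-2i)!(n-r+i)!}\leq 16^n,
\]
noting that the theorem is stated for any code $\mathscr{C}_\theta(n,M,d_{G(\theta)})$ over $\mathcal{R}_\theta$ and hence applies in particular to our linear code. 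Dividing both sides by $M$ and using the expression above for $M$ yields the claimed inequality
\[
\sum_{r=0}^{\left\lfloor\frac{d_{G(\theta)}-1}{2}\right\rfloor}\sum_{i=0}^{\left\lfloor\frac{r}{2}\right\rfloor}\frac{n!\,9^i6^{r-2i}}{i!(r-2i)!(n-r+i)!}\leq 16^{\,n-\left(k_0+\frac{3k_1}{4}+\frac{k_2}{2}+\frac{k_3}{4}\right)}.
\]

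There is essentially no obstacle here; the only care needed is to confirm that the type $\{k_0,k_1,k_2,k_3\}$ is well-defined for all four rings $\theta\in\{2,3,1+2w,2+2w\}$. For $\theta\in\{2,2+2w\}$ this is equation (\ref{G_2}), and for $\theta\in\{3,1+2w\}$ this is equation (\ref{G_3}); in both cases the total codeword count is $16^{k_0}8^{k_1}4^{k_2}2^{k_3}$, so a single unified statement covers all four cases. No new inequalities or combinatorial identities need to be proved.
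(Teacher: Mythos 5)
Your proposal is correct and is exactly the intended argument: the paper states this corollary without proof as an immediate consequence of the Sphere Packing-like Bound, obtained by substituting $M=16^{k_0}8^{k_1}4^{k_2}2^{k_3}=16^{k_0+\frac{3k_1}{4}+\frac{k_2}{2}+\frac{k_3}{4}}$ (from the standard-form generator matrices in (\ref{G_2}) and (\ref{G_3})) and dividing. No further comment is needed.
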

\begin{corollary}
For the linear code $\mathscr{C}_\theta$ of type $\{k_0, k_1\}$ over $\mathcal{R}_\theta$ for $\theta\in\{1+w,1+3w,3+w,3+3w\}$, the Sphere Packing-like bound will be given by,
\begin{equation*}
\sum\limits_{r=0}^{\left \lfloor\frac{d_{G(\theta)}-1}{2}\right \rfloor}\sum_{i=0}^{\left \lfloor \frac{r}{2} \right \rfloor}\frac{n!9^i6^{r-2i}}{i!(r-2i)!(n-r+i)!} \leq 16^{n-\left(k_0 +\frac{k_1}{2}\right)}.
\end{equation*}
\end{corollary}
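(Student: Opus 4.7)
The plan is to derive this corollary as a direct specialization of the Sphere Packing-like bound proved just above. The only ingredient I need beyond that theorem is the count of codewords in a linear code of type $\{k_0,k_1\}$ over $\mathcal{R}_\theta$ for $\theta \in \{1+w, 3+w, 1+3w, 3+3w\}$, which was recorded in Section~\ref{sec:Preliminaries} after equation~(\ref{G_1}): any such linear code satisfies $M = 16^{k_0} 4^{k_1}$.

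The first step is to rewrite this cardinality in base $16$, namely
\[
M \;=\; 16^{k_0}\,4^{k_1} \;=\; 16^{k_0}\,16^{k_1/2} \;=\; 16^{k_0 + k_1/2}.
\]
This allows the bound to be cast as a pure power of $16$, matching the right-hand side of the claimed inequality.

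The second (and final) step is to invoke the Sphere Packing-like bound established immediately above, which asserts
\[
M\sum_{r=0}^{\lfloor (d_{G(\theta)}-1)/2\rfloor}\sum_{i=0}^{\lfloor r/2\rfloor} \frac{n!\,9^{i}\,6^{r-2i}}{i!\,(r-2i)!\,(n-r+i)!} \;\leq\; 16^n
\]
for any code $\mathscr{C}_\theta(n,M,d_{G(\theta)})$ over $\mathcal{R}_\theta$. Since a linear code of type $\{k_0,k_1\}$ is in particular a code of this form, I can substitute $M = 16^{k_0 + k_1/2}$ and divide both sides by $M$ (a positive integer), yielding the desired inequality.

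There is really no obstacle here: the work has already been done in the parent theorem, and the preliminary section supplies the codeword count. The only thing to verify is that $4^{k_1} = 16^{k_1/2}$ is written consistently with the exponent convention in the statement, which it is. I would present the argument in two or three lines, explicitly citing the Sphere Packing-like bound and the codeword-count formula after~(\ref{G_1}), and leave it at that.
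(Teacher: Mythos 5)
Your proposal is correct and is precisely the intended derivation: the paper states this corollary without proof as an immediate consequence of the Sphere Packing-like bound, using the codeword count $M = 16^{k_0}4^{k_1} = 16^{k_0+k_1/2}$ for a type $\{k_0,k_1\}$ linear code recorded after equation~(\ref{G_1}). Substituting this $M$ into the parent theorem and dividing through yields the stated inequality, exactly as you describe.
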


\subsection{Gilbert Varshamov-like Bound}
\begin{theorem}[Gilbert Varshamov-like Bound]
If $A_{16}(n,d_{G(\theta)})$ is the maximum possible size of a code $\mathscr{C}_\theta$ with length $n$ and minimum Gau distance $d_{G(\theta)}$ over the ring $\mathcal{R}_\theta$ then
\begin{equation*}
16^n\leq A_{16}(n,d_{G(\theta)}) \sum_{r=0}^{d_{G(\theta)}-1}\sum_{i= \max\{0,r-n\}}^{\left \lfloor \frac{r}{2} \right \rfloor} \frac{n!9^i6^{r-2i}}{i!(r-2i)!(n-r+i)!}.
\end{equation*}
\end{theorem}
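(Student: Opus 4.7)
The plan is to use the standard greedy/maximality argument for Gilbert--Varshamov bounds, adapted to the Generalized Gau distance setting. Let $\mathscr{C}_\theta$ be a code of length $n$ over $\mathcal{R}_\theta$ that achieves the maximum cardinality $A_{16}(n,d_{G(\theta)})$ among all codes with minimum Gau distance at least $d_{G(\theta)}$.

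First I would argue by maximality: if there existed any vector $\mathbf{x}\in\mathcal{R}^n_\theta$ with $d_{G(\theta)}(\mathbf{x},\mathbf{c})\geq d_{G(\theta)}$ for every $\mathbf{c}\in\mathscr{C}_\theta$, then $\mathscr{C}_\theta\cup\{\mathbf{x}\}$ would still have minimum Gau distance at least $d_{G(\theta)}$, contradicting the maximality of $|\mathscr{C}_\theta|$. Hence for every $\mathbf{x}\in\mathcal{R}^n_\theta$ there is at least one $\mathbf{c}\in\mathscr{C}_\theta$ with $d_{G(\theta)}(\mathbf{x},\mathbf{c})\leq d_{G(\theta)}-1$, which means $\mathbf{x}\in S_{d_{G(\theta)}-1}(\mathbf{c})$ in the sense of Definition \ref{sphere definition}. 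Consequently,
\[
\mathcal{R}^n_\theta \;=\; \bigcup_{\mathbf{c}\in\mathscr{C}_\theta} S_{d_{G(\theta)}-1}(\mathbf{c}).
\]

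Next I would take cardinalities and apply a union bound. Since $|\mathcal{R}^n_\theta|=16^n$, we get
\[
16^n \;\leq\; \sum_{\mathbf{c}\in\mathscr{C}_\theta} \bigl|S_{d_{G(\theta)}-1}(\mathbf{c})\bigr|.
\]
By Lemma \ref{circle count} together with the decomposition \eqref{sphere count}, the size $|S_r(\mathbf{c})|$ depends only on $r$ and not on the center $\mathbf{c}$, so every sphere in the union has the same cardinality
\[
\sum_{r=0}^{d_{G(\theta)}-1}\;\sum_{i=\max\{0,r-n\}}^{\lfloor r/2\rfloor}\frac{n!\,9^i\,6^{r-2i}}{i!\,(r-2i)!\,(n-r+i)!}.
\]
Substituting this expression and using $|\mathscr{C}_\theta|=A_{16}(n,d_{G(\theta)})$ yields the claimed inequality.

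The argument is essentially routine once the volume formula of Lemma \ref{circle count} is in place; the only point that requires a moment of care is verifying that the ball-size formula is indeed center-free, which follows because the Gau distance is translation-invariant under the structure of the matrix $\mathscr{M}$ in \eqref{Gau Distance Matrix} (the counts $9^i$ and $6^{r-2i}$ were derived for an arbitrary but fixed $\mathbf{c}$, so the same bookkeeping applies at every center). No step here is a genuine obstacle; the only mildly delicate matter is keeping the summation ranges consistent with those in Lemma \ref{circle count} so that the final inequality matches the stated form exactly.
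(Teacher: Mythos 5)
Your proposal is correct and follows essentially the same route as the paper: maximality forces the radius-$(d_{G(\theta)}-1)$ spheres centred at codewords to cover $\mathcal{R}^n_\theta$, and the union bound combined with the centre-independent sphere-volume formula of Lemma \ref{circle count} and Equation (\ref{sphere count}) gives the stated inequality. The only difference is cosmetic: you spell out the greedy contradiction and the centre-independence explicitly, which the paper leaves implicit.
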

\begin{proof}
Let $\mathscr{C}_\theta$ $(n,M,d_{G(\theta)})$ be a code over the ring $\mathcal{R}_\theta$ with maximal size $|\mathscr{C}_\theta|$ = $M$ = $A_{16}(n,d_{G(\theta)})$. 
Therefore, for each $\textbf{x}\in \mathcal{R}^n_\theta$, there exists at least one codeword $\textbf{c}\in\mathscr{C}_\theta$ such that $d_{G(\theta)}(\textbf{c},\textbf{x})\leq d_{G(\theta)}-1$. 
So, \begin{equation*}
\mathcal{R}^n_\theta=\bigcup_{\textbf{c}\in\mathscr{C}_\theta}S_{d_{G(\theta)}-1}(\textbf{c}),    
\end{equation*}
where the sphere $S_{d_{G(\theta)}-1}(\textbf{c})$ is given by Definition \ref{sphere definition}.
But from Lemma \ref{circle count} and Equation \ref{sphere count}, for each $\textbf{c}\in\mathscr{C}_\theta$ the sphere size will be
\begin{equation*}
    \begin{split}
        |S_{d_{G(\theta)}-1}(\textbf{c})| = & \sum_{r=0}^{d_{G(\theta)}-1}|C_r(\textbf{c})| \\
        = & \sum_{r=0}^{d_{G(\theta)}-1}\sum_{i= \max\{0,r-n\}}^{\left \lfloor \frac{r}{2} \right \rfloor} \frac{n!9^i6^{r-2i}}{i!(r-2i)!(n-r+i)!}.
    \end{split}
\end{equation*}
Hence, 
\begin{equation*}
    \begin{split}
        16^n = & |\mathcal{R}^n_\theta| \\ 
        = & \left|\bigcup_{\textbf{c}\in\mathscr{C}_\theta}S_{d_{G(\theta)}-1}(\textbf{c})\right| \\ 
        \leq & \sum_{\textbf{c}\in\mathscr{C}_\theta}\sum_{r=0}^{d_{G(\theta)}-1}\sum_{i= \max\{0,r-n\}}^{\left \lfloor \frac{r}{2} \right \rfloor} \frac{n!9^i6^{r-2i}}{i!(r-2i)!(n-r+i)!} \\
        = & A_{16}(n,d_{G(\theta)})\sum_{r=0}^{d_{G(\theta)}-1}\sum_{i= \max\{0,r-n\}}^{\left \lfloor \frac{r}{2} \right \rfloor} \frac{n!9^i6^{r-2i}}{i!(r-2i)!(n-r+i)!}.
    \end{split}
\end{equation*}
\end{proof}
\subsection{Singleton-like Bound}
To establish the Singleton-like Bound, we first need to prove the following Lemma.

\begin{lemma}
Let $\mathscr{C}_\theta$ $(n,M,d_{G(\theta)})$ be a code over the ring $\mathcal{R}_\theta$. Then by deleting the first $\left \lfloor \frac{d_{G(\theta)} - 1}{2} \right \rfloor\ (=r)$ letters in every codeword of $\mathscr{C}_\theta$, we will obtain exactly $M$ distinct codewords of length $n-\left \lfloor \frac{d_{G(\theta)} - 1}{2} \right \rfloor$, i.e. $n - r$.
\label{deletaed code word}
\end{lemma}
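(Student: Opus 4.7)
The plan is to prove that the $M$ codewords of $\mathscr{C}_\theta$ remain pairwise distinct after chopping off the first $r = \lfloor (d_{G(\theta)}-1)/2\rfloor$ coordinates. Since deletion cannot create new codewords, showing no two collisions occur gives exactly $M$ truncated codewords, each of length $n-r$, as desired.

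I would proceed by contradiction. Suppose $\mathbf{c}_1 = (c_{1,1}\ c_{1,2}\ \ldots\ c_{1,n})$ and $\mathbf{c}_2 = (c_{2,1}\ c_{2,2}\ \ldots\ c_{2,n})$ are two distinct codewords in $\mathscr{C}_\theta$ whose truncations $(c_{1,r+1}\ \ldots\ c_{1,n})$ and $(c_{2,r+1}\ \ldots\ c_{2,n})$ coincide. Then $\mathbf{c}_1$ and $\mathbf{c}_2$ agree on the last $n-r$ coordinates and can differ only on the first $r$ coordinates, so
\[
d_{G(\theta)}(\mathbf{c}_1,\mathbf{c}_2) \;=\; \sum_{i=1}^{r} d_{G(\theta)}(c_{1,i},c_{2,i}).
\]

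The key observation (the only nontrivial ingredient) is that the Gau distance between any two single ring elements is at most $2$. This follows directly from the definition $d_{G(\theta)}(x,y) = \min\{1, i+3i'\} + \min\{1, j+3j'\}$ in Equation (\ref{Gau Distance Matrix}), since each of the two summands is either $0$ or $1$. Applying this coordinatewise gives
\[
d_{G(\theta)}(\mathbf{c}_1,\mathbf{c}_2) \;\leq\; 2r \;=\; 2\Bigl\lfloor \tfrac{d_{G(\theta)}-1}{2} \Bigr\rfloor \;\leq\; d_{G(\theta)} - 1,
\]
which contradicts the minimum Gau distance of $\mathscr{C}_\theta$ being $d_{G(\theta)}$.

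I do not foresee a real obstacle here; the argument is a direct adaptation of the standard Singleton-bound deletion argument, adjusted for the fact that the per-coordinate distance is bounded by $2$ rather than $1$. This factor of $2$ is precisely what forces the number of deletable coordinates to be $\lfloor (d_{G(\theta)}-1)/2 \rfloor$ rather than $d_{G(\theta)}-1$, and it is the only subtlety worth flagging in the write-up.
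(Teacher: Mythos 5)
Your proof is correct and follows essentially the same route as the paper: assume two distinct codewords collide after truncation, note that the per-coordinate Gau distance is at most $2$, bound the total distance by $2r \leq d_{G(\theta)}-1$, and contradict the minimum distance. If anything, your write-up is cleaner than the paper's (whose final inequality chain is garbled), so no changes are needed.
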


\begin{proof}
Let $\mathbf{x}=( x_1\ x_2\ldots\ x_{r-1}\ x_r\ x_{r+1}\ldots\ x_n)$ and $\mathbf{y}=(y_1\ y_2\ldots\ y_{r-1}\ y_r$ $ y_{r+1}\ldots\ y_n)$ be distinct codewords in $\mathscr{C}_\theta$, such that $(x_{r+1}\ldots\ x_n)$ = $(y_{r+1}\ldots$ $y_n)$.
Then $\exists$ at least one $j\in \{1,2,\ldots ,r\}$ such that $d_{G(\theta)} (x_j , y_j) \neq 0$ (since, $\mathbf{x} \neq \mathbf{y}$). \\
Now consider the general case,
$d_{G(\theta)} (\mathbf{x},\mathbf{y}) = r_1 + 2r_2 $, where $r = r_1 + r_2$ and $r_1,r_2$ are nonnegative integers. 
Then, \ $d_{G(\theta)} \leq r_1 + 2r_2$, $i.e.$,
$\left \lfloor \frac{d_{G(\theta)} - 1}{2} \right \rfloor \leq \left \lfloor \frac{(r_1 + 2r_2) - 1}{2} \right \rfloor$ which contradicts, $r = \left \lfloor \frac{d_{G(\theta)} - 1}{2} \right \rfloor < \left \lfloor \frac{2r - 1}{2} \right \rfloor < r$. 
Hence there will be exactly $M$ distinct codewords of length $n - r$ in $\mathscr{C}_\theta$. 
\end{proof}
\begin{remark}
From Lemma \ref{deletaed code word}, it is obvious that $\exists$ at least one $l \geq r+1$ such that $d_{G(\theta)} (x_l , y_l) \neq 0$.\\
\end{remark}

Using the above lemma we can proceed to the Singleton-like bound for a code over $\mathcal{R}_\theta$.
\begin{theorem}[Singleton-like Bound]
If $\mathscr{C}_\theta (n,M,d_{G(\theta)})$ is a code (not necessarily linear) over $\mathcal{R}_\theta$, then 
$
M \leq 16^{n- \left \lfloor \frac{d_{G(\theta)} - 1}{2} \right \rfloor}.    
$
\label{singleton bound}
\end{theorem}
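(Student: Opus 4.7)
The plan is to derive this bound as an immediate consequence of Lemma \ref{deletaed code word}. Set $r = \left \lfloor \frac{d_{G(\theta)} - 1}{2} \right \rfloor$. The bound says that if we project codewords onto a sufficiently large set of coordinates, distinct codewords must remain distinct; we then count the projections inside the ambient space.

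First, I would apply Lemma \ref{deletaed code word} directly to $\mathscr{C}_\theta$: deleting the first $r$ coordinates from each codeword yields a set of exactly $M$ pairwise distinct vectors of length $n - r$. Next, since each of these shortened vectors lies in $\mathcal{R}_\theta^{n-r}$ and $|\mathcal{R}_\theta| = 16$, the ambient space has cardinality $16^{n-r}$. Therefore
\begin{equation*}
M \leq \bigl|\mathcal{R}_\theta^{n-r}\bigr| = 16^{n - \left \lfloor \frac{d_{G(\theta)} - 1}{2} \right \rfloor},
\end{equation*}
which is the desired inequality.

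Because the heavy lifting is already done inside Lemma \ref{deletaed code word}, there is no remaining obstacle: the proof is a one-line pigeonhole-style counting argument on the projected codewords. The only subtlety worth emphasising (for completeness) is that the choice of which $r$ coordinates to delete is immaterial, since the Generalized Gau distance is symmetric under permutations of positions; deleting the first $r$ coordinates is merely a convenient choice consistent with the formulation of Lemma \ref{deletaed code word}. I would not repeat the distance-preservation argument from the lemma, but simply cite it and conclude.
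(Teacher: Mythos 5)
Your proposal is correct and follows exactly the same route as the paper's own proof: invoke Lemma \ref{deletaed code word} to obtain $M$ distinct shortened vectors of length $n-r$ and then bound $M$ by the size $16^{n-r}$ of the ambient space $\mathcal{R}_\theta^{n-r}$. No gap, and nothing substantively different to compare.
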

\begin{proof}
From Lemma \ref{deletaed code word}, we have exactly $M$ distinct codewords of length $n - r$, by deleting the first $r\ (=\left \lfloor \frac{d_{G(\theta)} - 1}{2} \right \rfloor)$ letters in every codewords of $\mathcal{C}_\theta$. 
The order of the ring $\mathcal{R}_\theta$ is sixteen, so $M \leq 16^{n- \left \lfloor \frac{d_{G(\theta)} - 1}{2} \right \rfloor}$ which is the Singleton-like bound for any code $\mathcal{C}_\theta$ (not necessarily linear) over $\mathcal{R}_\theta$.
\end{proof}
\begin{corollary}
If the code $\mathscr{C}_\theta (n,M,d_{G(\theta)})$ is linear and of type $\{k_0,k_1,k_2,k_3\}$ over $\mathcal{R}_\theta$ for $\theta\in\{2,3,1+2w,2+2w\}$, then 
\begin{equation*}
{\left \lfloor \frac{d_{G(\theta)} - 1}{2} \right \rfloor} \leq n - \left( k_0 +\frac{3k_1}{4} +\frac{k_2}{2}+\frac{k_3}{4}\right).     
\end{equation*}
\end{corollary}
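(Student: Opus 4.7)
The plan is to combine the Singleton-like bound from Theorem \ref{singleton bound} with the explicit codeword count that follows from the standard form of the generator matrix for type $\{k_0,k_1,k_2,k_3\}$ linear codes over the chain rings $\mathcal{R}_\theta$ with $\theta \in \{2, 2+2w, 3, 1+2w\}$, and then take a base-$16$ logarithm to convert the multiplicative bound into the claimed additive one.

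First, I would recall from the generator matrices in equations (\ref{G_2}) and (\ref{G_3}) that for $\theta \in \{2, 2+2w, 3, 1+2w\}$, a linear code $\mathscr{C}_\theta$ of type $\{k_0, k_1, k_2, k_3\}$ contains exactly
\[
M \;=\; 16^{k_0}\,8^{k_1}\,4^{k_2}\,2^{k_3} \;=\; 2^{\,4k_0 + 3k_1 + 2k_2 + k_3}
\]
codewords, as already noted in the paragraphs preceding equation (\ref{G_2}) and (\ref{G_3}). This is simply because each of the $k_0$, $k_1$, $k_2$, $k_3$ information rows contributes an independent choice of coefficient from a subring/ideal of orders $16$, $8$, $4$, $2$ respectively.

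Next, I would invoke the (non-linear) Singleton-like bound of Theorem \ref{singleton bound}, namely $M \leq 16^{\,n - \left\lfloor (d_{G(\theta)}-1)/2\right\rfloor}$, which applies in particular to our linear code. Writing $16 = 2^4$ on both sides, this becomes
\[
2^{\,4k_0 + 3k_1 + 2k_2 + k_3} \;\leq\; 2^{\,4\bigl(n - \left\lfloor (d_{G(\theta)}-1)/2\right\rfloor\bigr)}.
\]
Comparing exponents and dividing by $4$ yields
\[
k_0 + \tfrac{3k_1}{4} + \tfrac{k_2}{2} + \tfrac{k_3}{4} \;\leq\; n - \left\lfloor \tfrac{d_{G(\theta)}-1}{2}\right\rfloor,
\]
which rearranges to the required bound.

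There is no real obstacle here; the corollary is essentially a restatement of Theorem \ref{singleton bound} after substituting the closed-form cardinality for codes of a prescribed type. The only point that deserves a small sanity check is that the quantity $k_0 + \tfrac{3k_1}{4} + \tfrac{k_2}{2} + \tfrac{k_3}{4}$ is exactly $\log_{16} M$, so the substitution is lossless and no strengthening or weakening of the bound occurs.
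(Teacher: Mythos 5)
Your proposal is correct and matches the paper's (implicit) derivation: the paper states this corollary without proof as an immediate consequence of Theorem \ref{singleton bound} together with the cardinality $M = 16^{k_0}8^{k_1}4^{k_2}2^{k_3}$ recorded for type $\{k_0,k_1,k_2,k_3\}$ codes in the preliminaries, which is exactly the substitution and base-$16$ logarithm you perform. No further comment is needed.
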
\begin{corollary}
If the code $\mathscr{C}_\theta (n,M,d_{G(\theta)})$ is linear and of type $\{k_0,k_1\}$ over $\mathcal{R}_\theta$ for $\theta\in\{1+w,3+w,1+3w,3+3w\}$, then 
\begin{equation*}
{\left \lfloor \frac{d_{G(\theta)} - 1}{2} \right \rfloor} \leq n - \left( k_0 +\frac{k_1}{2}\right).     
\end{equation*}
\end{corollary}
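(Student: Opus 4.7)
The plan is to apply Theorem \ref{singleton bound} directly, using the known cardinality of a linear code of the specified type. Recall from the discussion of the generator matrix $G_\theta$ in equation (\ref{G_1}) that a linear code $\mathscr{C}_\theta$ of type $\{k_0,k_1\}$ over $\mathcal{R}_\theta$ for $\theta \in \{1+w, 3+w, 1+3w, 3+3w\}$ has exactly $M = 16^{k_0} 4^{k_1}$ codewords. Since $4 = 16^{1/2}$, this can be rewritten as $M = 16^{k_0 + k_1/2}$, which puts $M$ in a form directly comparable with the right-hand side of the Singleton-like bound.

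Next I would substitute this expression into the inequality $M \leq 16^{n - \lfloor (d_{G(\theta)}-1)/2 \rfloor}$ from Theorem \ref{singleton bound}, obtaining
\[
16^{k_0 + k_1/2} \leq 16^{n - \left\lfloor (d_{G(\theta)}-1)/2 \right\rfloor}.
\]
Since the function $x \mapsto 16^x$ is strictly increasing on the reals, the inequality can be taken to the exponents, yielding
\[
k_0 + \tfrac{k_1}{2} \leq n - \left\lfloor \tfrac{d_{G(\theta)}-1}{2} \right\rfloor,
\]
which on rearrangement is exactly the claimed bound.

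There is essentially no obstacle here: the corollary is a one-line specialization of Theorem \ref{singleton bound} once the cardinality formula for type $\{k_0,k_1\}$ linear codes is invoked. The only point requiring a brief mention is that the right-hand side of the bound remains an integer (since both $n$ and $\lfloor (d_{G(\theta)}-1)/2\rfloor$ are integers), while $k_0 + k_1/2$ may be a half-integer; this causes no issue because the comparison $a \leq b$ is valid over $\mathbb{R}$, and the floor on the left-hand side of the conclusion is preserved. The proof can therefore be stated as essentially a substitution, parallel to the earlier corollary for the other ring classes.
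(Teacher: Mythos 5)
Your proof is correct and matches the paper's (implicit) reasoning: the paper states no separate proof for this corollary precisely because it follows by substituting the cardinality $M = 16^{k_0}4^{k_1} = 16^{k_0 + k_1/2}$ of a type $\{k_0,k_1\}$ linear code (given with the standard-form generator matrix in equation (\ref{G_1})) into the Singleton-like bound of Theorem \ref{singleton bound} and comparing exponents. Your remark about the half-integer exponent is a harmless but reasonable clarification.
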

\begin{definition}
Any code $\mathscr{C}_\theta$ defined on the ring $\mathcal{R}_\theta$ is called Maximum Gau Distance Separable (MGDS) code, if the code $\mathscr{C}_\theta$ satisfies the Singleton-like bound as given in Theorem \ref{singleton bound}.
\end{definition}
\begin{remark}
For any positive integer $n$ and $\theta\in \mathbb{Z}_4+w\mathbb{Z}_4$, the code $\mathscr{C}_\theta$ = $\mathcal{R}^n_\theta$ with the parameter $(n,16^n,1)$ is MGDS code over the ring $\mathcal{R}_\theta$. 
\end{remark}
\begin{remark}
For any positive integer $n$, if $\textbf{a}\in\{2,2+2w\}^n$ then the code $\mathscr{C}_\theta$ = $\langle \textbf{a}\rangle$ with the parameter $(n,4,2n)$ is MGDS code over the ring $\mathcal{R}_\theta$, where $\theta\in\{2,2+2w\}$. 
\end{remark}
\begin{remark}
For any positive integer $n$, if $\textbf{a}\in\{2,2w\}^n$ then the code $\mathscr{C}_\theta$ = $\langle \textbf{a}\rangle$ with the parameter $(n,4,2n)$ is MGDS code the ring $\mathcal{R}_\theta$, where $\theta\in\{3,1+2w\}$. 
\end{remark}
\begin{remark}
For any positive integer $n$, if $\textbf{a}\in\{2,2w,2+2w\}^n$ then the code $\mathscr{C}_\theta$ = $\langle \textbf{a}\rangle$ with the parameter $(n,4,2n)$ is MGDS code the ring $\mathcal{R}_\theta$, where $\theta\in\{1+w,3+w,1+2w,3+3w\}$. 
\end{remark}
\subsection{Plotkin-like bound on the codes over $\mathcal{R}_\theta$}

In the following we establish the Plotkin-like bound considering the Generalized Gau distance over the rings $\mathcal{R}_\theta$.
\begin{theorem}[Plotkin-like Bound]
A code $\mathscr{C}_\theta$ $(n,M,d_{G(\theta)})$ over the ring $\mathcal{R}_\theta$ will satisfy,
\begin{equation*}
    M \leq \left\lfloor\frac{2d_{G(\theta)}}{2d_{G(\theta)}-3n}\right\rfloor\ \mbox{ for }2d_{G(\theta)}>3n.
\end{equation*}
\label{Plotkin Bound}
\end{theorem}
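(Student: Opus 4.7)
The plan is a double-counting argument on the sum
\[
S = \sum_{\x,\y \in \mathscr{C}_\theta} d_{G(\theta)}(\x,\y),
\]
where the sum is over all ordered pairs of codewords (including $\x=\y$). I would bound $S$ from below using the minimum distance and from above using the coordinate-wise structure of the Gau distance inherited from the DNA isometry of Theorem~1.

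\textbf{Lower bound.} Since $d_{G(\theta)}(\x,\y) \geq d_{G(\theta)}$ whenever $\x\neq \y$, and the diagonal terms vanish, one has immediately $S \geq M(M-1)\,d_{G(\theta)}$.

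\textbf{Upper bound.} This is the part that needs the DNA structure. By the isometry $\phi:(\mathcal{R}^n_\theta,d_{G(\theta)})\rightarrow(\Sigma_{DNA}^{2n},d_H)$ we can write, for each coordinate $i\in\{1,\dots,n\}$ and each pair $(\x,\y)$, $d_{G(\theta)}(x_i,y_i) = [\alpha_i^{(1)}(\x)\neq \alpha_i^{(1)}(\y)] + [\alpha_i^{(2)}(\x)\neq \alpha_i^{(2)}(\y)]$, where $\alpha_i^{(1)}$ and $\alpha_i^{(2)}$ denote the two DNA symbols assigned to the $i$-th ring coordinate by $\phi$. Equivalently, each of the four rows (resp. columns) of the matrix $\mathscr{M}$ in~(\ref{Gau Distance Matrix}) partitions $\mathcal{R}_\theta$ into four classes of size $4$. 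Fix coordinate $i$ and let $n_a$ ($a \in \Sigma_{DNA}$) denote the number of codewords whose $i$-th coordinate lies in row-class $a$; then $\sum_a n_a = M$ and the standard Plotkin computation over a $4$-ary alphabet gives
\[
\sum_{\x,\y\in\mathscr{C}_\theta}[\alpha_i^{(1)}(\x)\neq \alpha_i^{(1)}(\y)] = M^2 - \sum_{a} n_a^2 \leq M^2\Bigl(1-\tfrac{1}{4}\Bigr) = \tfrac{3M^2}{4},
\]
where the last inequality follows from Cauchy--Schwarz ($\sum_a n_a^2 \geq M^2/4$). The identical bound holds for the column-class sum. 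Summing over the two sub-coordinates and then over $i=1,\dots,n$ yields $S \leq n\cdot\tfrac{3M^2}{2} = \tfrac{3nM^2}{2}$.

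\textbf{Combining.} Putting the two bounds together gives $M(M-1)\,d_{G(\theta)} \leq \tfrac{3nM^2}{2}$, which rearranges to $M(2d_{G(\theta)}-3n) \leq 2d_{G(\theta)}$. Under the hypothesis $2d_{G(\theta)} > 3n$ the coefficient of $M$ is positive, so $M \leq \frac{2d_{G(\theta)}}{2d_{G(\theta)}-3n}$, and since $M$ is an integer the floor function may be inserted, giving the claimed bound.

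The one place where care is required is the per-coordinate upper bound: one must observe that the combinatorial structure of the Gau distance on $\mathcal{R}_\theta$ decomposes as a sum of two indicator functions over a $4$-element quotient (the row/column partition induced by $\phi$). Once this is in hand the rest is the standard Plotkin manipulation, so I do not expect any real obstacle beyond identifying the correct partition from the matrix $\mathscr{M}$.
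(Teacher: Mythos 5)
Your proof is correct and follows the same overall Plotkin-style double count as the paper: both arguments bound $\sum_{i\neq j} d_{G(\theta)}(\mathbf{x}_i,\mathbf{x}_j)$ below by $M(M-1)d_{G(\theta)}$, above by $\tfrac{3nM^2}{2}$, and then rearrange under the hypothesis $2d_{G(\theta)}>3n$. The difference is in how the per-coordinate upper bound $\tfrac{3M^2}{2}$ is justified. The paper works directly with the sixteen ring elements: letting $s_{m,l}$ count occurrences of $m$ in column $l$, it expresses the column sum as $\sum_{m}s_{m,l}\bigl(\sum_{j\in A_m}s_{j,l}+2\sum_{j\in B_m}s_{j,l}\bigr)$ with $|A_m|=6$ and $|B_m|=9$, and then asserts that by symmetry this quadratic form is maximized at $s_{m,l}=M/16$. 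You instead use the isometry with $\Sigma_{DNA}^{2n}$ to split $d_{G(\theta)}$ on each coordinate into two indicator functions over the row- and column-partitions of $\mathcal{R}_\theta$ into four classes of size four, and apply Cauchy--Schwarz to get $M^2-\sum_a n_a^2\leq\tfrac{3}{4}M^2$ for each sub-coordinate. The two computations yield the same constant, but your version is the more solid one: a symmetric quadratic form on a simplex need not attain its maximum at the barycentre, so the paper's symmetry appeal is not a complete justification as stated, whereas your decomposition is precisely the argument that certifies the paper's claimed inequality. The remaining steps (the lower bound, the algebraic rearrangement to $M(2d_{G(\theta)}-3n)\leq 2d_{G(\theta)}$, and the insertion of the floor since $M$ is an integer) coincide with the paper's.
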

\begin{proof}
For positive integers $n$ and $M$, and $i=1,2,\ldots,M$, consider $\mathbf{x}_i=(x_{i,1}\ x_{i,2}\ldots x_{i,n})\in\mathscr{C}_\theta$. 
We will prove the theorem by using lower and upper bounds of
$
\sum_{i=1}^M\sum_{j=1}^M d_{G(\theta)} (\mathbf{x}_i,\mathbf{x}_j).
$
For lower bound, there are $M$ choices for $\mathbf{x}_i$ and hence, $M-1$ choices are left for $\mathbf{x}_j$. 
Therefore,
 \begin{equation}
    \sum_{i=1}^M\sum_{\substack{j=1 \\ i\neq j}}^M d_{G(\theta)} (\mathbf{x}_i,\mathbf{x}_j) \geq M(M-1)d_{G(\theta)}. 
 \label{plot eq1}
 \end{equation}
For upper bound, consider a matrix of order $M\times n$, whose rows represent the codewords of $\mathscr{C}_\theta$. 
So, 
\begin{equation}
    \sum_{i=1}^M\sum_{\substack{j=1 \\ i\neq j}}^M d_{G(\theta)} (\mathbf{x}_i,\mathbf{x}_j) = 
    \sum_{l=1}^n\sum_{i=1}^M\sum_{\substack{j=1 \\ i\neq j}}^M d_{G(\theta)} (x_{i,l}, x_{j,l}).
\label{plot inter eq 1}
\end{equation}
For each $m\in\mathcal{R}_\theta$, let $s_{m,l}$ ($l= 1,2, \ldots n$) denotes the occurrence of the ring element $m$ in the $l^{th}$ column of the matrix.
Then for $l= 1,2, \ldots, n$, 
\begin{equation}
    \sum_{m\in\mathcal{R}}s_{m,l}=M.
    \label{plot eq 6}
\end{equation}
Let us also define two sets $A_m=\{y: d_G(m,y)=1, y\in\mathcal{R}_\theta\}$ and $B_m=\{y: d_G(m,y)=2, y\in\mathcal{R}_\theta\}$, for each $m\in\mathcal{R}_\theta$, using properties of Gau distance.
Then $A_m\cup B_m\cup\{m\}=\mathcal{R}_\theta$, for each $m$. 
Now, for $l=1,2,\ldots,n$, if $s_{i,l}$ is the number of occurrence of the element $i\in\mathcal{R}_\theta$ in the column $l$ of the matrix then
\begin{align}
    \sum_{i=1}^M  \sum_{\substack{j=1 \\ i\neq j}}^M d_G (x_{i,l}, x_{j,l}) 
& = 
\sum_{m\in\mathcal{R}_\theta}\binom{s_{m,l}}{1}\left(1\sum_{j\in A_m}\binom{s_{j,l}}{1}+2\sum_{j\in B_m}\binom{s_{j,l}}{1}\right) \label{plot eq 4} \\ 
& =
\sum_{m\in\mathcal{R}_\theta}s_{m,l}\left(\sum_{j\in A_m}s_{j,l}+2\sum_{j\in B_m}s_{j,l}\right).
\end{align}
The symmetricity of the RHS in Equation (\ref{plot eq 4}) implies that it will be maximum along with the constraint Equation (\ref{plot eq 6}) if $s_{i,l}=M/16$ for each $i\in\mathcal{R}_\theta$ and $l=1,2,\ldots,n$.
Hence, 
\begin{equation}
\sum_{m\in\mathcal{R}_\theta} s_{m,l} \left(\sum_{j\in A_m}s_{j,l}\right. +  \left.2\sum_{j\in B_m}s_{j,l}\right) \leq \sum_{m\in\mathcal{R}_\theta}\frac{24}{16^2}M^2=\frac{3M^2}{2},
\label{plot eq 5}
\end{equation}
where $|A_m|=6$ and $|B_m|=9$ for each $m\in\mathcal{R}_\theta$.
From Inequality (\ref{plot eq1}), Equation (\ref{plot inter eq 1}), Equation (\ref{plot eq 4}) and Inequality (\ref{plot eq 5}), we obtain,
$
    M(M-1)d_{G(\theta)}\leq\sum_{l=1}^n\frac{3M^2}{2}.
$
Therefore, 
$
    M \leq \frac{2d_{G(\theta)}}{2d_{G(\theta)}-3n}.
$
Note that $M,n$ and $d_{G(\theta)}$ are finite positive integers.
Therefore, 
$
    M \leq \left\lfloor\frac{2d_{G(\theta)}}{2d_{G(\theta)}-3n}\right\rfloor \mbox{ when } 2d_{G(\theta)}>3n.
$
\end{proof}

\begin{corollary}
If the code $\mathscr{C}_\theta (n,M,d_{G(\theta)})$ is linear and of type $\{k_0,k_1,k_2,k_3\}$ over $\mathcal{R}_\theta$, $\theta\in\{2,3,1+2w,2+2w\}$, then,
\begin{equation*}
 \left( k_0 +\frac{3k_1}{4} +\frac{k_2}{2}+\frac{k_3}{4}\right) \leq log_{16}\left( \frac{2d_{G(\theta)}}{2d_{G(\theta)}-3n} \right)    
\end{equation*}
when $2d_{G(\theta)}>3n$.
\end{corollary}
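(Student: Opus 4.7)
The plan is to directly combine the cardinality formula for a linear code of type $\{k_0,k_1,k_2,k_3\}$ over $\mathcal{R}_\theta$ (for $\theta\in\{2,3,1+2w,2+2w\}$) with the Plotkin-like bound established in Theorem \ref{Plotkin Bound}. The cardinality formula was stated in Section \ref{sec:Preliminaries}: a linear code with generating matrix in the standard form (\ref{G_2}) or (\ref{G_3}) has exactly $M=16^{k_0}\,8^{k_1}\,4^{k_2}\,2^{k_3}$ codewords, so the inequality $M\leq \lfloor 2d_{G(\theta)}/(2d_{G(\theta)}-3n)\rfloor$ translates into a statement about the exponents $k_0,k_1,k_2,k_3$ once we take logarithms.

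Concretely, I would first invoke Theorem \ref{Plotkin Bound} (which applies since the code is a subset of $\mathcal{R}_\theta^n$ regardless of linearity), giving
\[
16^{k_0}\,8^{k_1}\,4^{k_2}\,2^{k_3}\;=\;M\;\leq\;\frac{2d_{G(\theta)}}{2d_{G(\theta)}-3n}
\]
under the hypothesis $2d_{G(\theta)}>3n$. Taking $\log_{16}$ of both sides and using the elementary identities $\log_{16}8=\tfrac{3}{4}$, $\log_{16}4=\tfrac{1}{2}$, $\log_{16}2=\tfrac{1}{4}$, the left-hand side becomes $k_0+\tfrac{3k_1}{4}+\tfrac{k_2}{2}+\tfrac{k_3}{4}$, which yields the claimed inequality. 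Dropping the floor function is harmless here because $\log_{16}$ is monotone and $M\leq \lfloor x\rfloor\leq x$.

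There is essentially no obstacle in this argument; it is a direct corollary obtained by substitution and taking logarithms. The only point worth noting is that the Plotkin-like bound applies to arbitrary (not necessarily linear) codes, so linearity is used solely to obtain the closed-form expression for $M$ in terms of $(k_0,k_1,k_2,k_3)$ via the standard-form generator matrix classification for the rings $\mathcal{R}_\theta$ with $\theta\in\{2,3,1+2w,2+2w\}$.
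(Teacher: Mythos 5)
Your proposal is correct and is essentially the paper's (implicit) argument: the corollary is stated without a separate proof precisely because it follows by substituting the cardinality $M=16^{k_0}8^{k_1}4^{k_2}2^{k_3}$ of a type $\{k_0,k_1,k_2,k_3\}$ linear code into the Plotkin-like bound of Theorem \ref{Plotkin Bound} and taking $\log_{16}$. Your remarks on dropping the floor and on where linearity is actually used are accurate.
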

\begin{corollary}
If the code $\mathscr{C}_\theta (n,M,d_{G(\theta)})$ is linear and of type $\{k_0,k_1\}$ over $\mathcal{R}_\theta$, $\theta\in\{1+w,1+3w,3+w,3+3w\}$, then for $2d_{G(\theta)}>3n$,
\begin{equation*}
 \left( k_0 +\frac{k_1}{2}\right) \leq \log_{16}\left( \frac{2d_{G(\theta)}}{2d_{G(\theta)}-3n} \right).    
\end{equation*}
\end{corollary}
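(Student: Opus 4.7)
The plan is to derive this as a direct corollary of Theorem \ref{Plotkin Bound} (the Plotkin-like bound) by substituting the explicit codeword count for linear codes of type $\{k_0, k_1\}$ over the non-chain/chain rings in question and then taking logarithms.

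First I would recall from Section \ref{sec:Preliminaries} that for $\theta \in \{1+w, 3+w, 1+3w, 3+3w\}$, a linear code $\mathscr{C}_\theta$ of type $\{k_0, k_1\}$ has generator matrix of the form (\ref{G_1}), and its cardinality is exactly $M = 16^{k_0} \cdot 4^{k_1}$. This is the only ring-specific fact needed; everything else is borrowed from the already-proved Plotkin-like bound.

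Next, assuming $2d_{G(\theta)} > 3n$, I would invoke Theorem \ref{Plotkin Bound} to conclude
\begin{equation*}
16^{k_0} \cdot 4^{k_1} = M \leq \frac{2d_{G(\theta)}}{2d_{G(\theta)} - 3n}.
\end{equation*}
Taking $\log_{16}$ on both sides, and using $\log_{16}(4) = \tfrac{1}{2}$, the left-hand side becomes $k_0 + \tfrac{k_1}{2}$, which yields the stated inequality. The floor function in the statement of the Plotkin bound can simply be dropped on the right-hand side since removing it only weakens the bound, so the logarithm is well-defined (the hypothesis $2d_{G(\theta)} > 3n$ ensures the argument is positive).

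I do not anticipate a genuine obstacle here: the only subtlety worth a sentence of care is ensuring monotonicity of $\log_{16}$ (trivial, since it is increasing) and that the cardinality formula $M = 16^{k_0} \cdot 4^{k_1}$ is valid for every $\theta$ in the stated set (which follows uniformly from the ring structure described in Subsection~A of Section \ref{sec:Preliminaries}, since all four rings share the same ideal lattice $\langle 0\rangle \subset \langle 2\rangle \subset \mathcal{R}_\theta$ and hence the same standard-form enumeration). The conclusion then follows in one line.
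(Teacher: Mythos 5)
Your proposal is correct and is essentially the argument the paper intends: the corollary is an immediate consequence of the Plotkin-like bound combined with the cardinality formula $M = 16^{k_0}4^{k_1}$ for type $\{k_0,k_1\}$ linear codes over these rings, followed by taking $\log_{16}$ (and dropping the floor, which only weakens the inequality). No gap here.
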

\begin{remark}
For any positive integer $n$, if $\textbf{a}\in\{2,2+2w\}^n$ then the family of codes $\mathscr{C}_\theta (n,4,2n)$ = $\langle \textbf{a}\rangle$ satisfies the Plotkin-like bound with equality over the ring $\mathcal{R}_\theta$, where $\theta\in\{2,2+2w\}$. 
\end{remark}
\begin{remark}
For any positive integer $n$, if $\textbf{a}\in\{2,2w\}^n$ then the class of codes $\mathscr{C}_\theta (n,4,2n)$ = $\langle \textbf{a}\rangle$ will be optimal with respect to the Plotkin-like bound over the ring $\mathcal{R}_\theta$, where $\theta\in\{3,1+2w\}$. 
\end{remark}
\begin{remark}
For any positive integer $n$, if $\textbf{a}\in\{2,2w,2+2w\}^n$ then the class of codes $\mathscr{C}_\theta (n,4,2n)$ = $\langle \textbf{a}\rangle$ satisfies the Plotkin-like bound optimally over the ring $\mathcal{R}_\theta$, where $\theta\in\{1+w,3+w,1+2w,3+3w\}$. 
\end{remark}

\section{Conclusion}
\label{sec:Conclusion}
The present paper includes the exhaustive study of the $16$ ring structures $\mathbb{Z}_4+w\mathbb{Z}_4$, $w^2 \in \mathbb{Z}_4+w\mathbb{Z}_4$.
We extended the concept of Gau map and Gau distance, proposed in \cite{DKBG}, over the entire class of $16$ rings.
The isometry between the codes over the rings and the analogous DNA codes has been established universally considering Gau distance and Hamming distance over the elements of the rings and DNA strings of length $2$, respectively.   
In addition to this, necessary and sufficient conditions have been proposed so that the constructed DNA codes satisfy reverse and reverse-complement constraints.
We enrich the study of dual and self dual codes along with the characterization of corresponding generator matrix with different specifications in a general manner over the rings.
Prominent bounds $i.e.$, Sphere Packing-like bound, GV-like bound, Singleton-like bound and Plotkin-like bound have been proposed considering the Gau distance over all the rings.
Parameters for special classes of optimal codes have been proposed that meet Singleton-like bound and Plotkin-like bound, respectively. 
Also the construction of a class of DNA codes using the Reed-Muller type codes over the rings has been given, which satisfies reverse and reverse-complement constraints.
In future, it will be an interesting task to define Gau weight over the rings considered in this work.
Utilizing the advantage, the derivation of McWilliam's Identity can be done in future.
The construction of different families of optimal codes for each of the proposed bounds will also be an engaging work.

\section*{Acknowledgment}
This work is funded by the Deanship of Scientific Research (DSR), King Abdulaziz University, Jeddah under grant number (DF-594-130-1441). The authors, therefore, gratefully acknowledge DSR technical and financial support.

\bibliographystyle{IEEEtran} 
\bibliography{mybibfile}

\end{document}